\newcommand{\vect}[1]{\mathbf{#1}}
\newcommand{\RR}{\mathbb{R}}
\newcommand\norm[1]{\left\lVert#1\right\rVert}
\newcommand\abs[1]{\left\lvert#1\right\rvert}
\newcommand\parens[1]{\left(#1\right)}
\newcommand\angles[1]{\langle#1\rangle}
\newcommand*{\pr}[2][]{\text{Pr}\ifx\\\left[#1\right]\\\else_{#1}\fi \left[#2\right]}
\newcommand*{\EE}[2][]{\mathbb{E}\ifx\\\left[#1\right]\\\else_{#1}\fi \left[#2\right]}
\newcommand{\partl}[2]{\frac{\partial #1}{\partial #2}}
\newcommand{\prb}{\text{pr}}
\newcommand{\wt}{\text{wt}}
\newcommand{\ws}{\text{WS}}
\newtheorem{ctr}{ctr}[section]
\newtheorem{theorem}[ctr]{Theorem}
\newtheorem{prop}[ctr]{Proposition}
\newtheorem{corollary}[ctr]{Corollary}
\newtheorem{claim}[ctr]{Claim}
\newtheorem{defin}[ctr]{Definition}
\newtheorem*{nonothm}{Theorem}
\theoremstyle{definition}
\newtheorem{myalgorithm}[ctr]{Algorithm}
\newtheorem{remark}[ctr]{Remark}
\title{From Proper Scoring Rules to Max-Min Optimal Forecast Aggregation}
\author{Eric Neyman, Tim Roughgarden}
\begin{document}
\begin{titlepage}
\maketitle

\begin{abstract}
This paper forges a strong connection between two seemingly unrelated forecasting problems: incentive-compatible forecast elicitation and forecast aggregation.  Proper scoring rules are the well-known solution to the former problem.  To each such rule $s$ we associate a corresponding method of aggregation, mapping expert forecasts and expert weights to a ``consensus forecast," which we call \emph{quasi-arithmetic (QA) pooling} with respect to $s$.  We justify this correspondence in several ways:

\begin{itemize}
\item QA pooling with respect to the two most well-studied scoring rules (quadratic and logarithmic) corresponds to the two most well-studied forecast aggregation methods (linear and logarithmic). 

\item Given a scoring rule $s$ used for payment, a forecaster agent who sub-contracts several experts, paying them in proportion to their weights, is best off aggregating the experts' reports using QA pooling with respect to $s$, meaning this strategy maximizes its worst-case profit (over the possible outcomes). 

\item The score of an aggregator who uses QA pooling is concave in the experts' weights. As a consequence, online gradient descent can be used to learn appropriate expert weights from repeated experiments with low regret. 

\item The class of all QA pooling methods is characterized by a natural set of axioms (generalizing classical work by Kolmogorov on quasi-arithmetic means).
\end{itemize}
\end{abstract}
\end{titlepage}

\section{Introduction and motivation} \label{sec:intro}
\subsection{Probabilistic opinion pooling} \label{subsec:pooling}
You are a meteorologist tasked with advising the governor of Florida on hurricane preparations. A hurricane is threatening to make landfall in Miami, and the governor needs to decide whether to order a mass evacuation. The governor asks you what the likelihood is of a direct hit, so you decide to consult several weather models at your disposal. These models all give you different answers: 10\%, 25\%, 70\%. You trust the models equally, but your job is to come up with one number for the governor --- your best guess, all things considered. What is the most sensible way for you to aggregate these numbers?\\

This is one of many applications of \emph{probabilistic opinion pooling}. The problem of probabilistic opinion pooling (or \emph{forecast aggregation}) asks: how should you aggregate several probabilities, or probability distributions, into one? This question is relevant in nearly every domain involving probabilities or risks: meteorology, national security, climate science, epidemiology, and economic policy, to name a few.\\

For a different example of opinion pooling, suppose that you are a consultant tasked with determining the likelihood of your party winning various races in the upcoming election, as part of helping optimally allocate resources. There are many election forecasting models, such as those produced by FiveThirtyEight and The Economist. Again, these models give somewhat different probabilities; what is the most sensible way to aggregate these numbers into a single probability for each race?

In some sense this example is isomorphic to the previous one, but there are domain-specific considerations which might lead to different pooling methods being desirable in the two settings. We will elaborate on this later.\\

As a third example, consider websites that rely on the ``wisdom of crowds" to arrive at a reasonable forecast. One such website is \url{metaculus.com}, which elicits probabilistic forecasts from users on a variety of subjects. In this setting, the goal is to produce an aggregate forecast that reflects the opinion of a large, moderately informed crowd, as opposed to a small, well-informed pool of experts.\\

All of these examples fall into the following framework: there are $m$ experts, who report probability distributions $\vect{p}_1, \dots, \vect{p}_m$ over $n$ possible outcomes (we call these \emph{reports}, or \emph{forecasts}). Additionally, each expert $i$ has a non-negative weight $w_i$ (with weights adding to $1$); this weight represents the expert's quality, i.e. how much the aggregator trusts the expert. A pooling method takes these distributions and weights as input and outputs a single distribution $\vect{p}$. (Where do these weights come from? How can one learn weights for experts? More on this later.)\\

\textbf{Linear pooling} is arguably the simplest of all reasonable pooling methods: a weighted arithmetic mean of the probability distributions.
\[\vect{p} = \sum_{i = 1}^m w_i \vect{p}_i\]
Linear pooling is the most frequently used opinion pooling method across a wide range of applications, including meteorology, economics, and medical diagnoses \cite{rg10}. Indeed, linear pooling frequently outperforms attempts at more sophisticated pooling methods, which often lead to overfitting \cite{wglj18}.\\

\textbf{Logarithmic pooling} (sometimes called \emph{log-linear} or \emph{geometric} pooling) consists of taking a weighted geometric mean of the probabilities and scaling appropriately.
\[p(j) = c \prod_{i = 1}^m (p_i(j))^{w_i}.\]
Here, $p(j)$ denotes the probability of the $j$-th outcome and $c$ is a normalizing constant to make the probabilities add to $1$. The logarithmic pool has a natural interpretation as an average the experts' Bayesian evidence (see Appendix~\ref{appx:intro} for details).\\

The linear and logarithmic pooling methods are by far the two most studied ones, see e.g. \cite{gz86}, \cite{pr00}, \cite{kf08}. This is because they are simple and follow certain natural rules, which we briefly discuss in Section~\ref{sec:related_work}. Furthermore, they are each optimal according to some natural optimality metrics, see e.g. \cite{abbas09}.

\subsection{Proper scoring rules} \label{subsec:proper}
A seemingly unrelated topic within probabilistic forecasting is the truthful elicitation of forecasts: how can a principal structure a contract so as to elicit an expert's probability distribution in a way that incentivizes truthful reporting? This is usually done using a \emph{proper scoring rule}.

A \emph{scoring rule} is a function $s$ that takes as input (1) a probability distribution over $n$ outcomes\footnote{We specify the domain of $s$ more precisely in Section~\ref{sec:prelims}.} and (2) a particular outcome, and assigns a \emph{score}, or reward. The interpretation is that if the expert reports a distribution $\vect{p}$ and event $j$ comes to pass, then the expert receives reward $s(\vect{p}; j)$ from the principal. A scoring rule is called \emph{proper} if the expert's expected score is strictly maximized by reporting their probability distribution truthfully. That is, $s$ is proper if
\[\sum_{j = 1}^n \vect{p}(j) s(\vect{p}; j) \ge \sum_{j = 1}^n \vect{p}(j) s(\vect{x}; j)\]
for all $\vect{x}$, with equality only for $\vect{x} = \vect{p}$. It is worth noting that properness is preserved under positive affine transformations. That is, if $s$ is proper, then $s'(\vect{p}; j) := a s(\vect{p}; j) + b$ is proper if $a > 0$.

\paragraph{\textbf{Quadratic scoring rule}} One example of a proper scoring rule is \emph{Brier's quadratic scoring rule}, introduced in \cite{brier50}. It is given by
\[s_{\text{quad}}(\vect{p}; j) := 2 p(j) - \sum_{k = 1}^n p(k)^2.\]
The quadratic scoring rule can be interpreted as penalizing the expert by an amount equal to the squared distance from their report $\vect{p}$ to the ``true answer" $\delta_j$ (i.e. the vector with a $1$ in the $j$-th position and zeros elsewhere).

\paragraph{\textbf{Logarithmic scoring rule}} Another example of a proper scoring rule is the \emph{logarithmic scoring rule}, introduced in \cite{good52}. It is given by
\[s_{\log}(\vect{p}; j) := \ln p(j).\]
The logarithmic rule is the only proper scoring rule for which an expert's score only depends on the probability assigned to the eventual outcome and not other outcomes \cite{sam66}. The quadratic and logarithmic scoring rules are by far the most studied and most frequently used ones in practice.

\paragraph{\textbf{Choice of scoring rule as a value judgment}} There are infinitely many proper scoring rules. How might a principal go about deciding which one to use? To gain some intuition, we will take a closer look at the quadratic and logarithmic scoring rules in the case of $n = 2$ outcomes. In Figure~\ref{fig:quad_log}, for both of these scoring rules, we show the difference between the expert's reward if a given outcome happens and their reward if it does not happen, as a function of the probability that they assign to the outcome.\footnote{We scale down the logarithmic rule by a factor of $2 \ln 2$ to make the two rules comparable. The factor $2 \ln 2$ was chosen to make the range of values taken on by the Savage representations of the two scoring rules the same (see Section~\ref{subsec:definitions}).}

\begin{figure}[ht]
	\centering
	\includegraphics[scale=1]{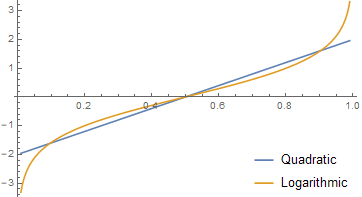}
	\caption{Difference between expert's reward if an outcome happens and if it does not happen, as a function of the expert's report, for the quadratic and logarithmic scoring rules. For example, if the expert reports a 70\% probability of an outcome, then under the quadratic rule they receive a score of $2 \cdot 0.7 - 0.7^2 - 0.3^2 = 0.82$ if the outcome happens and $2 \cdot 0.3 - 0.7^2 - 0.3^2 = 0.02$ if it does not: a difference of $0.8$. If rewarded with the logarithmic rule, this difference would be $0.61$.}
	\label{fig:quad_log}
\end{figure}

This difference scales linearly with the expert's report for the quadratic rule. Meanwhile, for the logarithmic rule, the difference changes more slowly than for the quadratic rule for probabilities in the middle, but much more quickly at the extremes. Informally speaking, this means that the logarithmic rule indicates a preference (of the elicitor) for high precision close to $0$ and $1$, while the quadratic rule indicates a more even preference for precision across $[0, 1]$. Put another way, an elicitor who chooses to use the logarithmic scoring rule renders a judgment that the probabilities $0.01$ and $0.001$ are quite different; one who uses the quadratic rule indicates that these probabilities are very similar.\\

On its surface, the \emph{elicitation} of forecasts has seemingly little to do with their \emph{aggregation}. However, given that the choice of scoring rule implies a subjective judgment about how different probabilities compare to one another, it makes sense to apply this judgment to the aggregation of forecasts as well.

As an example, consider the setting of weather prediction, with models playing the role of experts. In such contexts we often care about low-probability extreme events: a 0.1\% chance of an imminent major hurricane may not be worth preparing for; a 1\% chance could mean significant preparations, and a 10\% chance could mean mandatory evacuations. The need to distinguish very unlikely events from somewhat unlikely events has two consequences. First, as discussed above, this is a reason to use the logarithmic scoring rule to assess the quality of weather models. Second, we wish to avoid the failure mode in which an ill-informed forecaster assigns a high probability due to lack of evidence and thereby drowns out a better-informed low-probability forecast. We would expect to encounter this failure mode with linear pooling; for example, if the more informed model predicts a 0.1\% chance and the less informed model predicts a 20\% chance, linear pooling with equal weights\footnote{Assigning equal weights make sense when there is not enough information to predict in advance which model will be less informed.} would predict roughly a 10\% chance. Logarithmic pooling, by contrast, assigns roughly a 1.6\% chance to the event, avoiding this failure mode. In general, a calibrated model that predicts a very low probability must have good evidence, so it may make sense to give the model more weight.

By contrast, a consultant whose job is to determine the closest races in an election may not much care about the difference between a 0.1\% and a 1\% chance of victory. After all, attention and resources are generally devoted to races with highly uncertain outcomes. As such, it might make sense to assess the qualities of forecast models using the quadratic scoring rule. Similarly, without a compelling reason to pay extra attention to extreme probabilities, it may make more sense to simply take the average of the forecasts' opinions.

In the case of extreme weather prediction, we have argued in favor of using the logarithmic scoring rule to assess the models and the logarithmic pooling method to aggregate them, for similar reasons. In the case of political prediction for targeting close races, we have argued in favor of the quadratic scoring rule and linear pooling, also for similar reasons. Could there be a formal connection between proper scoring rules and opinion pooling methods that captures this intuition? This brings us to the main focus of our paper: namely, we prove a novel correspondence between proper scoring rules and opinion pooling methods.

\subsection{Our definitions} \label{subsec:definitions}
Before introducing the aforementioned correspondence, we need to introduce the \emph{Savage representation} of a proper scoring rule.

\paragraph{\textbf{Savage representation}} A proper scoring rule has a unique representation in terms of its expected reward function $G$, i.e. the expected score on an expert who believes (and reports) a distribution $\vect{p}$:
\[G(\vect{p}) := \EE[j \leftarrow \vect{p}]{s(\vect{p}; j)} = \sum_{j = 1}^n p(j) s(\vect{p}; j).\]
This representation of $s$, introduced in \cite{sav71}, is known as the \emph{Savage representation}, though we will usually refer to it as the \emph{expected reward function}. Given that $s$ is proper, $G$ is strictly convex; and conversely, given a strictly convex function $G$, one can re-derive $s$ with the formula
\begin{equation} \label{eq:s_from_g_1}
s(\vect{p}; j) = G(\vect{p}) + \angles{\vect{g}(\vect{p}), \delta_j - \vect{p}},
\end{equation}
where $\vect{g}$ is the gradient\footnote{Or a subgradient, if $G$ is not differentiable.} of $G$ \cite{gr07}. Pictorially, draw the tangent plane to $G$ at $\vect{p}$; then the expert's score if outcome $j$ is realized is the height of the plane at $\delta_j$.

The Savage representation of the quadratic scoring rule is $G_{\text{quad}}(\vect{p}) = \sum_{j = 1}^n p(j)^2$. The Savage representation of the logarithmic scoring rule is $G_{\log}(\vect{p}) = \sum_{j = 1}^n p(j) \ln p(j)$.

The function $\vect{g}$, which will be central to our paper, describes the difference in the expert's score depending on which outcome happens. More precisely, the vector $(s(\vect{p}; j_1), \dots, s(\vect{p}; j_n))$ is exactly the vector $\vect{g}(\vect{p})$, except possibly for a uniform translation in all coordinates. For example, $s(\vect{p}; j_1) - s(\vect{p}; j_2) = g_1(\vect{p}) - g_2(\vect{p})$; this is precisely the quantity plotted in Figure~\ref{fig:quad_log} for the quadratic and logarithmic scoring rules. This observation about the function $\vect{g}$ motivates the connection that we will establish between proper scoring rules and opinion pooling methods.

\paragraph{\textbf{Quasi-arithmetic opinion pooling}}
We can now define our correspondence between proper scoring rules and opinion pooling methods. Given a proper scoring rule $s$ used for elicitation, and given $m$ probability distributions $\vect{p}_1, \dots, \vect{p}_m$ and expert weights $w_1, \dots, w_m$, the aggregate distribution $\vect{p}^*$ that we suggest is the one satisfying
\[\vect{g}(\vect{p}^*) = \sum_{i = 1}^m w_i \vect{g}(\vect{p}_i).\]
(In Section~\ref{sec:prelims} we will define this notion more precisely using subgradients of $G$ instead of gradients; this will ensure that $\vect{p}^*$ is well defined, i.e. that it exists and is unique.) This definition of $\vect{p}^*$ can be restated as the forecast that minimizes the weighted average Bregman divergence (with respect to $G$) to all experts' forecasts.

We refer to this pooling method as \emph{quasi-arithmetic pooling} with respect to $\vect{g}$ (or the scoring rule $s$), or \emph{QA pooling} for short.\footnote{This term comes from the notion of quasi-arithmetic means: given a continuous, strictly increasing function $f$ and values $x_1, \dots, x_m$ , the \emph{quasi-arithmetic mean with respect to $f$} of these values is $f^{-1}(1/m \sum_i f(x_i))$.} To get a sense of QA pooling, let us determine what this method looks like for the quadratic and logarithmic scoring rules.

\paragraph{\textbf{QA pooling with respect to the quadratic scoring rule}} We have $\vect{g}_{\text{quad}}(\vect{x}) = (2x_1, \dots, 2x_n)$, so we are looking for the $\vect{p}^*$ such that
\[(2p^*(1), \dots, 2p^*(n)) = \sum_{i = 1}^m w_i (2p_i(1), \dots, 2p_i(n)).\]
This is $\vect{p}^* = \sum_{i = 1}^m w_i \vect{p}_i$. Therefore, \emph{QA pooling for the quadratic scoring rule is precisely linear pooling}.

\paragraph{\textbf{QA pooling with respect to the logarithmic scoring rule}}
We have $\vect{g}_{\log}(\vect{x}) = (\ln x_1 + 1, \dots, \ln x_n + 1)$, so we are looking for the $\vect{p}^*$ such that
\[(\ln p^*(1) + 1, \dots, \ln p^*(n) + 1) = \sum_{i = 1}^m w_i (\ln p_i(1) + 1, \dots, \ln p_i(n) + 1).\]
By exponentiating the components on both sides, we find that $p^*(j) = c\prod_{i = 1}^n (p_i(j))^{w_i}$ for all $j$, for some proportionality constant $c$. \emph{This is precisely the definition of the logarithmic pooling method.} (The constant $c$ comes from the fact that values of $\vect{g}(\cdot)$ should be interpreted modulo translation by the all-ones vector; see Remark~\ref{remark:mod_T1n}.)\\

The fact that this pooling scheme maps the two most well-studied scoring rules to the two most well-studied opinion pooling methods has not been noted previously, to our knowledge. This correspondence suggests that --- beyond just our earlier informal justification --- QA pooling with respect to a given scoring rule may be a fundamental concept. The rest of this paper argues that this is indeed the case.

This correspondence may have practical implications for forecasters. While the quadratic and logarithmic scoring rules are both ubiquitous in practice, linear pooling is far more common than logarithmic pooling \cite{rg10}. This is despite empirical evidence that logarithmic pooling often outperforms linear pooling \cite{sat14}. The connection that we establish between the logarithmic scoring rule and logarithmic pooling provides further reason to think that logarithmic pooling has been somewhat overlooked.

\subsection{Our results}
\paragraph{\textbf{(Section~\ref{sec:max_min}) Max-min optimality}} Suppose that a principal asks you to issue a forecast and will pay you according to $s$. You are not knowledgeable on the subject but know some experts whom you trust on the matter (perhaps to varying degrees). You sub-contract the experts, promising to pay each expert $i$ according to $w_i \cdot s$. By using QA pooling according to $s$ on the experts' forecasts, you guarantee yourself a profit; in fact, this strategy maximizes your worst-case profit, and is the unique such report. Furthermore, this profit is the same for all outcomes. This fact can be interpreted to mean that you have, in a sense, pooled the forecasts ``correctly": you do not care which outcome will come to pass, which means that you have correctly factored the expert opinions into your forecast. We give an additional interpretation of this optimality notion as maximizing an aggregator's guaranteed improvement over choosing an expert at random.

\paragraph{\textbf{(Section~\ref{sec:convex_losses}) Learning expert weights}} Opinion pooling entails assigning weights to experts. Where do these weights come from? How might one learn them from experience?

Suppose we have a fixed proper scoring rule $s$, and further consider fixing the reports of the $m$ experts as well as the eventual outcome. One can ask: what does the score of the aggregate distribution (per QA pooling with respect to $s$) look like as a function of $\vect{w}$, the vector of expert weights? We prove that this function is concave. This is useful because it allows for online convex optimization over expert weights.

\begin{nonothm}[informal]
Let $s$ be a bounded proper scoring rule.\footnote{For which QA pooling is well defined (we discuss this below).} For time steps $t = 1 \dots T$, $m$ experts report forecasts to an aggregator, who combines them into a forecast $\vect{p}^t$ using QA pooling with respect to $s$ and suffers a loss of $-s(\vect{p}^t; j^t)$, where $j^t$ is the outcome at time step $t$. If the aggregator updates the experts' weights using online gradient descent, then the aggregator's regret compared to the best weights in hindsight is $O(\sqrt{T})$.
\end{nonothm}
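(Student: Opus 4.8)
The plan is to prove the key structural fact first---that the aggregator's score is a concave function of the weight vector $\vect{w}$---and then feed it into a standard online convex optimization regret bound. The engine behind concavity is the convex conjugate (Legendre--Fenchel transform) $G^*$ of the expected reward function $G$. Writing $\vect{y} := \sum_{i=1}^m w_i \vect{g}(\vect{p}_i)$ for the right-hand side of the QA-pooling equation~\eqref{eq:quasi_def}, the defining relation $\vect{g}(\vect{p}^*) = \vect{y}$ says precisely that $\vect{p}^* = \nabla G^*(\vect{y})$, and the Fenchel--Young equality gives $G^*(\vect{y}) = \angles{\vect{p}^*, \vect{y}} - G(\vect{p}^*)$. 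Substituting this into the Savage representation~\eqref{eq:s_from_g_1} collapses the score to a clean expression:
\[
s(\vect{p}^*; j) = G(\vect{p}^*) + \angles{\vect{y}, \delta_j - \vect{p}^*} = y_j - \parens{\angles{\vect{p}^*, \vect{y}} - G(\vect{p}^*)} = y_j - G^*(\vect{y}).
\]

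From this identity the concavity is immediate. The map $\vect{w} \mapsto \vect{y}$ is linear, so $y_j$ is a linear function of $\vect{w}$; and $G^*$ is convex (as the conjugate of a convex function), so $-G^*(\vect{y})$ is concave in $\vect{w}$. Their sum $y_j - G^*(\vect{y})$ is therefore concave in $\vect{w}$. Equivalently, the per-round loss $\ell_t(\vect{w}) := -s(\vect{p}^t; j^t) = G^*(\vect{y}^t) - y^t_{j^t}$ is convex in $\vect{w}$ on the weight simplex $\Delta_m = \{\vect{w} : w_i \ge 0,\ \sum_i w_i = 1\}$.

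Next I would run online gradient descent over $\Delta_m$, which is convex, compact, and of diameter at most $\sqrt{2}$. A gradient computation (chain rule through $\vect{y}^t$, using $\nabla G^*(\vect{y}^t) = \vect{p}^*$) gives $\partial \ell_t / \partial w_i = \angles{\vect{p}^*, \vect{g}(\vect{p}_i^t)} - g_{j^t}(\vect{p}_i^t) = \sum_k p^*(k)\parens{g_k(\vect{p}_i^t) - g_{j^t}(\vect{p}_i^t)}$, where the second equality uses $\sum_k p^*(k) = 1$. Since $g_k(\vect{p}) - g_{j}(\vect{p}) = s(\vect{p}; k) - s(\vect{p}; j)$ is bounded by twice the bound on $s$, each partial derivative---and hence $\norm{\nabla \ell_t}$---is bounded by a constant $L$ depending only on $m$ and the bound on $s$. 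The textbook OGD guarantee with step size $\eta = \frac{\sqrt{2}}{L\sqrt{T}}$ then yields regret $O(L\sqrt{T}) = O(\sqrt{T})$ against the best fixed $\vect{w} \in \Delta_m$ in hindsight.

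I expect the main obstacle to be not any single step but the bookkeeping that makes the conjugate machinery legitimate. Because $\vect{g}$ is defined only modulo translation by the all-ones vector (Remark~\ref{remark:mod_T1n}), one must either restrict $G$ and $G^*$ to the affine hull $\{\vect{x} : \sum_j x_j = 1\}$ or verify directly that $y_j - G^*(\vect{y})$ is invariant under $\vect{y} \mapsto \vect{y} + c\vect{1}$ (it is: $y_j$ increases by $c$, and $G^*$ increases by $c$ on the simplex). One must also confirm that boundedness of $s$ is exactly what guarantees both that $\vect{p}^* = \nabla G^*(\vect{y})$ is well-defined on the relevant domain and that the gradients stay bounded; this is where the ``bounded'' and ``well-defined'' hypotheses do their work, excluding degenerate cases such as the logarithmic rule, whose $\vect{g}$ blows up near the boundary of the simplex.
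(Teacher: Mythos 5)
Your proposal is correct, and it reaches the same two milestones as the paper (concavity of the score in $\vect{w}$, then off-the-shelf online gradient descent) by a genuinely different route for the first and a mildly different one for the second. For concavity, the paper (Theorem~\ref{thm:ws_concave}, proved in Appendix~\ref{appx:convex_losses}) expands the concavity defect $\ws_j(c\vect{v}+(1-c)\vect{w}) - c\,\ws_j(\vect{v}) - (1-c)\,\ws_j(\vect{w})$ by hand and recognizes it as $c\, D_G(\vect{p}^*_{c\vect{v}+(1-c)\vect{w}} \parallel \vect{p}^*_{\vect{v}}) + (1-c)\, D_G(\vect{p}^*_{c\vect{v}+(1-c)\vect{w}} \parallel \vect{p}^*_{\vect{w}}) \ge 0$; your identity $s(\vect{p}^*;j) = y_j - G^*(\vect{y})$ packages the same fact via conjugate duality, reducing concavity to ``linear minus convex $\circ$ linear'' in one line (the two arguments are really the same computation, since $G(\vect{p}^*) - \angles{\vect{g}(\vect{p}^*),\vect{p}^*} = -G^*(\vect{y})$ and nonnegativity of the Bregman terms is exactly convexity of $G^*$ along the segment). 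Your version buys brevity and hands you the gradient formula $\nabla G^*(\vect{y}) = \vect{p}^*$ for free; the paper's version is more elementary and is reused verbatim for Theorem~\ref{thm:max_min} and Proposition~\ref{prop:bregman_min}. For the gradient bound, you bound each partial derivative $\sum_k p^*(k)(g_k(\vect{p}_i) - g_{j}(\vect{p}_i))$ by twice the bound on $s$, whereas the paper applies Cauchy--Schwarz to get $\abs{\angles{\vect{g}(\vect{p}_i), \vect{p}^* - \delta_j}} \le \sqrt{2}M$ with $M$ a bound on $\norm{\vect{g}}_2$; both yield a gradient norm of order $\sqrt{m}$ times a constant and hence the same $O(\sqrt{T})$ regret from the standard OGD theorem. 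One small imprecision at the end: boundedness of $s$ is what controls the gradients, but it is the convex exposure property (the separate ``well defined'' hypothesis in the footnote), not boundedness, that guarantees $\vect{y}$ lies in the range of $\vect{g}$ so that $\vect{p}^*$ exists; you name both hypotheses but slightly conflate their roles.
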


The aforementioned concavity property is a nontrivial fact that demonstrates an advantage of QA pooling over e.g. linear and logarithmic pooling: these pooling methods satisfy the concavity property for some proper scoring rules $s$ but not others.

\paragraph{\textbf{(Section~\ref{sec:axiomatization}) Natural axiomatization for QA pooling methods}} \cite{kol30} and \cite{nag30} independently came up with a simple axiomatization of quasi-arithmetic means. We show how to change these axioms to allow for weighted means; the resulting axiomatization is a natural characterization of all quasi-arithmetic pooling methods in the case of $n = 2$ outcomes. Furthermore, although quasi-arithmetic means are typically defined for scalar-valued functions, we demonstrate that these axioms can be extended to describe quasi-arithmetic means with respect to vector-valued functions, as is necessary for our purposes if $n > 2$. This extension is nontrivial but natural, and to our knowledge has not previously been described.\footnote{For $n > 2$, these axioms characterize the class of all QA pooling methods with respect proper scoring rules that satisfy \emph{convex exposure}, a natural condition that we introduce in Section~\ref{sec:prelims}.}

\section{Related work} \label{sec:related_work}
\paragraph{\textbf{Opinion pooling}} \cite{cw07} categorize mathematical approaches to opinion pooling as either \emph{Bayesian} or \emph{axiomatic}. A Bayesian approach to this problem is one that entails Bayes updating on each expert's opinion. While quite natural, Bayesian opinion pooling is difficult to apply and, in full generality, computationally intractable. This is because the Bayes updates must fully account for interdependencies between expert opinions.

By contrast, axiomatic approaches do not make assumptions about the structure of information underlying the experts' opinions; instead, they aim to come up with pooling methods that satisfy certain axioms or desirable properties. Such axioms include unanimity preservation, eventwise independence, and external Bayesianality; see e.g. \cite{dl14} for statements of these axioms. For $n > 2$ outcomes, linear pooling is the only method that is both unanimity preserving and eventwise independent \cite{aw80}; however, it is not externally Bayesian. On the other hand, logarithmic pooling is both unanimity preserving and externally Bayesian \cite{gen84}. It is not the only such method, but it is arguably the most natural. Our approach has an axiomatic flavor (though it differs substantially from previous axiomatic approaches).

The earliest comprehensive treatment of opinion pooling in a mathematical setting was by Genest and Zidek \cite{gz86}. See \cite{cw07} and \cite{dl14} for more recent high-level overviews of the subject, including discussion of both Bayesian and axiomatic approaches, as well as discussion of the axioms we discussed in Section~\ref{subsec:pooling}. Recent work on Bayesian opinion pooling includes \cite{cl13}, \cite{sbfmtu14}, \cite{fck15}, and \cite{abs18}. Finally, see \cite{fes20} for work on learning weights for linear pooling under a quadratic loss function; this is an instance of our more general theory in Section~\ref{sec:convex_losses}. For a more detailed discussion of axiomatic approaches, see \cite{acr12}. See also \cite{abbas09} for a non-Bayesian approach that is closely related to ours; our work generalizes \cite[Proposition 4]{abbas09}.

Part of our work provides an alternative interpretation of prior work on pooling via minimizing Bregman divergence, see e.g. \cite{ada14} and \cite{pet19}. (We define Bregman divergence in Section~\ref{sec:prelims}.) Concretely, \cite[\S4]{pet19} defines a notion of pooling analogous to ours, though in a different context. The main focus of their line of work is on connecting opinion pooling to Bregman divergence; our approach connects opinion pooling to proper scoring rules, and a connection to Bregman divergence falls naturally out of this pursuit.

\paragraph{\textbf{Scoring rules}} The literature on scoring rules is quite large; we recommend \cite{gr07} for a thorough but technical overview, or \cite{car16} for a less technical overview that focuses more on applications (while still introducing the basic theory). Seminal work on the theory behind scoring rules includes Brier's paper introducing the quadratic rule \cite{brier50}, Good's paper introducing the logarithmic rule \cite{good52}, and Savage's work on the general theory of proper scoring rules \cite{sav71}. Additionally, see \cite{dm14} for an overview of various families of proper scoring rules.

\paragraph{\textbf{Max-min optimality}} Gr\"{u}nwald and Dawid establish a connection between minimizing worst-case expected loss and maximizing a generalized notion of entropy \cite{gd04}. The application of their quite general work to the domain of proper scoring rules states that a forecaster who seeks to maximize their worst-case score (over outcomes) ought to report the minimizer of the expected reward function of the scoring rule. Although they do not discuss opinion pooling, their work is closely related to our Theorem~\ref{thm:max_min}; we discuss the connection in more detail in Section~\ref{sec:max_min}.

\paragraph{\textbf{Dual averaging}} One perspective on QA pooling is that, instead of directly averaging experts' forecasts, QA pooling prescribes considering forecasts as elements in the dual space of gradients (of the function $G$) and taking the average in this space before converting the result back to the primal space of probabilities. Gradient methods in online machine learning often take the \emph{sum} of gradients of losses. Taking the average is of gradients is a less ubiquitous technique known as \emph{dual averaging}, which was introduced by Nesterov \cite{nesterov09} and generalized further by Xiao \cite{xiao10}. The contexts or QA pooling and dual averaging are quite different, and interpretations of QA pooling in the context of dual averaging appear to be fairly unnatural.

\paragraph{\textbf{Aggregation via prediction markets}} One common way to aggregate probabilistic forecasts is through prediction markets, some of which are based on scoring rules. \cite{hanson03} introduced \emph{market scoring rules} (MSRs), in which experts are sequentially presented with an opportunity to update an aggregate forecast and are rewarded (or penalized) by the amount that their update changed the aggregate prediction's eventual score. \cite{cp07} introduced \emph{cost-function markets}, in which a market maker sells $n$ types of shares --- one for each outcome --- where the price of a share depends on the number of shares sold thus far according to some cost function. They established a connection between cost-function markets and MSRs, where a market with a given cost function will behave the same way as a certain MSR. In particular, the cost function $C$ of a cost function market is the convex dual of the expected reward function $G$ of the proper scoring rule associated with the corresponding MSR \cite[\S8.3]{acv13}.

Subsequent work explored this area further, tying cost-function market making to online learning of probability distributions \cite{cv10} \cite{acv13}. This work differs from ours in that the goal of their online learning problem is to learn a \emph{probability distribution over outcomes}, whereas our goal in Section~\ref{sec:convex_losses} is to learn \emph{expert weights}.

QA pooling has a simple interpretation in terms of cost function markets: for the cost function market corresponding to the scoring rule $s$, let $\vect{q}_i$ be the quantity vector that implies each expert $i$'s probability $\vect{p}_i$ (or in other terms, $i$ would buy a bundle $\vect{q}_i$ of shares as the first participant in the market). Then the QA pool with respect to $s$ is the probability implied by the weighted average quantity vector $\sum_i w_i \vect{q}_i$.

This interpretation stands in contrast to most past work on MSRs and cost function markets. Typically experts trade in series rather than in parallel, and incentives are set up so that an expert brings the market into alignment with their own opinion, rather than an aggregate. Thus, in the well-studied setting of traders whose beliefs do not depend on previous traders' actions, the final state of such a market reflects only the beliefs of the most recent trader, rather than an aggregate of beliefs. One exception to this paradigm is \cite{hlpv18}, which studies traders who trade in series, although in a completely different context.

\paragraph{\textbf{Arbitrage from collusion}} Part of our work can be viewed as a generalization of previous work by Chun and Shachter done in a different context: namely, preventing colluding experts from exploiting arbitrage opportunities \cite{cs11}. The authors show that for the case of $n = 2$ outcomes, if experts are rewarded with the same scoring rule $s$, preventing this is impossible: the experts can successfully collude by all reporting what we are calling the QA pool of their reports with respect to $s$. Our Theorem~\ref{thm:max_min} recovers this result as a special case. See \cite{cdpv14} for related work in the context of wagering mechanisms and \cite{fppw20} for follow-up work on preventing arbitrage from colluding experts.

\paragraph{\textbf{Prediction with expert advice}} In Section~\ref{sec:convex_losses} we discuss learning expert weights online. The online learning literature is vast, but our approach fits into the framework of \emph{prediction with expert advice}. In this setting, at each time step each expert submits a report (in our context a probability distribution). The agent then submits a report based on the experts' submissions, and suffers a loss depending on this report and the eventual outcome. See \cite{cl06} for a detailed account of this setting; the authors prove a variety of no-regret bounds in this setting, ranging (depending form the setting) from $O(\sqrt{T})$ to $O(1)$. Our setting is an ambitious one: while typically one desires low regret compared to the best expert in hindsight, we desire low regret compared to the best \emph{mixture} of experts in hindsight. In \cite[\S3.3]{cl06}, the authors discuss this more ambitious goal, proving (in our terminology) that for a certain class of losses --- bounded exp-concave losses --- it is possible to achieve $O(\log T)$ regret in comparison with the best linear pool of experts in hindsight. This setting is different from ours in two important ways: first, the losses that we consider are not in general exp-concave (e.g. the quadratic loss); and second, the authors consider linear pooling for any loss, whereas we consider QA pooling with respect to the loss function.

\paragraph{\textbf{Quasi-arithmetic means}} Our notion of quasi-arithmetic pooling is an adaptation (and extension to higher dimensions) of the existing notion of quasi-arithmetic means. These were originally defined and axiomatized independently in \cite{kol30} and \cite{nag30}. Acz\'{e}l generalized this work to include weighted quasi-arithmetic means \cite{aczel48}, though these means have weights baked in rather than taking them as inputs, which is different from our setting. See \cite[\S3.1]{gmmp11} for an overview of this topic.
\section{Preliminaries} \label{sec:prelims}
Throughout this paper, we will let $m$ be the number of experts and use the index $i$ to refer to any particular expert. We will let $n$ be the number of outcomes and use the index $j$ to refer to any particular outcome. We will also let $\Delta^n$ be the standard simplex in $\RR^n$, i.e.\ the one with vertices $\delta_1, \dots, \delta_n$. (Here, $\delta_j$ denotes the vector with a $1$ in the $j$-th coordinate and zeros elsewhere.)

\subsection{Proper scoring rules}
A \emph{scoring rule} is a function $s: \Delta^n \times [n] \to \RR \cup \{-\infty\}$. The interpretation of $s$ is that an expert receives reward $s(\vect{p}; j)$ if they report $\vect{p}$ and outcome $j$ happens. A scoring rule $s$ is \emph{proper} if for all $\vect{p} \in \Delta^n$, we have
\[\EE[j \leftarrow \vect{p}]{s(\vect{p}; j)} \ge \EE[j \leftarrow \vect{p}]{s(\vect{x}; j)}\]
for all $\vect{x} \in \Delta^n$, with equality only when $\vect{x} = \vect{p}$. (Here, $j \leftarrow \vect{p}$ means that $j$ is drawn randomly from the probability distribution $\vect{p}$.)\footnote{Some authors refer to such scoring rules as \emph{strictly} proper while others define ``proper" to entail strictness; we choose the latter convection.} We will henceforth assume that $s$ is \emph{regular}, meaning that $s(\cdot; j)$ is real-valued for all $j$, except possibly that $s(\vect{p}; j) = -\infty$ if $p_j = 0$ \cite[Definition 2]{gr07}.

We define the \emph{forecast domain} $\mathcal{D}$ associated with $s$ to be the set of forecasts $\vect{p}$ such that $s(\vect{p}; j)$ is real-valued for all $j$. When discussing forecast aggregation, we will assume that all forecasts belong to $\mathcal{D}$.\footnote{This choice removes from consideration cases such as two experts reporting $(1, 0)$ and $(0, 1)$ under the logarithmic scoring rule; aggregating these forecasts using our method is tantamount to adding positive and negative infinity.}

Given a proper scoring rule $s$, we define its \emph{expected reward function} $G: \Delta^n \to \RR$ by
\[G(\vect{p}) := \EE[j \leftarrow \vect{p}]{s(\vect{p}; j)} = \sum_{j = 1}^n p(j) s(\vect{p}; j).\]

We will frequently be relying on an alternative representation of $s$ -- sometimes known as the \emph{Savage representation} -- in terms of $G$.

\begin{prop}[{\cite[Theorem 2]{gr07}}] \label{prop:G_convex}
	A regular scoring rule $s$ is proper if and only if
	\begin{equation} \label{eq:s_from_g_2}
		s(\vect{p}; j) = G(\vect{p}) + \angles{\vect{g}(\vect{p}), \delta_j - \vect{p}},
	\end{equation}
	for some convex function $G: \Delta^n \to \RR$ and subgradient function\footnote{That is, $\vect{g}$ satisfies $G(\vect{y}) \ge G(\vect{x}) + \angles{\vect{g}(\vect{x}), \vect{y} - \vect{x}}$ for all $\vect{x}, \vect{y}$. Note that $g_i(\vect{x})$ may be $-\infty$ if $x_i = 0$; see \cite{waggoner21} for an examination of subgradients of functions to the extended reals.} $\vect{g}$ of $G$. The function $G$ is then the expected reward function of $s$.
\end{prop}

We will henceforth assume that $s(\cdot; j)$ is continuous on $\Delta^n$ for all $j$, as is $G(\cdot)$; to our knowledge, this is the case for all frequently-used proper scoring rules.\footnote{For $s$, continuity is with respect to the standard topology on $\RR \cup \{-\infty\}$, i.e.\ the one that includes sets of the form $[-\infty, r)$ as open sets.} By Equation~\ref{eq:s_from_g_2}, this means that $\vect{g}$ is also continuous on $\Delta^n$.\footnote{Continuity of each component $g_\ell$ of $\vect{g}$ is (as with $s$) with respect to the standard topology on $\RR \cup \{-\infty\}$. To see that $g_\ell$ is continuous at a given $\vect{p}$, consider the limit of Equation~\ref{eq:s_from_g_2} as $\vect{x} \to \vect{p}$ with $j = \ell$ if $p_j = 0$ and $j \neq \ell$ if $p_j \neq 0$.} A convex function with a continuous finite subgradient is differentiable \cite[Proposition 17.41]{bc11}, which means that $G$ is differentiable on the interior of $\Delta^n$, with gradient $\vect{g}$.

The important intuition to keep in mind for Equation~\ref{eq:s_from_g_2} is that the score of an expert who reports $\vect{p}$ is determined by drawing the tangent plane to $G$ at $\vect{p}$; the value of this plane at $\delta_j$, where $j$ is the outcome that happens, is the expert's score.

We refer to $\vect{g}$ as the \emph{exposure function} of $s$. We borrow this term from finance, where \emph{exposure} refers to how much an agent stands to gain or lose from various possible outcomes --- informally speaking, how much the agent cares about which outcome will happen. If we view $G(\vect{p}) - \angles{\vect{g}(\vect{p}), \vect{p}}$ as the agent's ``baseline profit," then the $j$-th component of $\vect{g}(\vect{p})$ is the amount that the agent stands to gain (or lose) on top of the baseline profit if outcome $j$ happens.\\

Following e.g.\ \cite{gr07}, we give a geometric intuition for Proposition~\ref{prop:G_convex} to help explain why the \emph{properness} of $s$ corresponds to the \emph{convexity} of $G$; this intuition will be helpful for understanding Bregman divergence (below) and the proofs in Section~\ref{sec:max_min}.

Consider Figure~\ref{fig:quad_example}, which depicts some $G$ in the $n = 2$ outcome case, with the $x$-axis corresponding to the probability of Outcome 1 (see Remark~\ref{remark:n2} for formal details). Suppose that the expert \emph{believes} that the probability Outcome 1 is $0.7$. If the expert \emph{reports} $p = 0.7$, then their reward in the cases of Outcome 1 and Outcome 2 are the $y$-values of the rightmost and leftmost points on the red line, respectively. Thus, in expectation, the expert's reward is the $y$-value of the red point. If instead the expert lies and reports $p = 0.4$ (say), then the $y$-values of the rightmost and leftmost points on the blue line represent the expert's rewards in the cases of Outcome 1 and Outcome 2, respectively. In this case, since Outcome 1 is still 70\% likely, the expert's expected reward is the $y$-value of the blue point. \emph{Because $G$ is strictly convex}, the blue point is strictly below the red point; that is, the expert is strictly better off reporting $p = 0.7$. This argument holds in full generality: for any strictly convex function $G$ in any number of dimensions.\\

\begin{figure}[ht]
	\centering
	\includegraphics[scale=0.7]{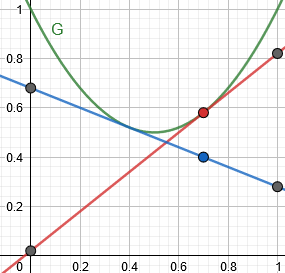}
	\caption{A convex function $G$, with tangent lines drawn at $x = 0.7$ and $x = 0.4$. If an expert believes that the probability of an event is $0.7$, their expected score if they report $p = 0.7$ is the $y$-value of the red point; if they instead report $p = 0.4$, their expected score is the $y$-value of the blue point. Because $G$ is convex, the red point is guaranteed to be above the blue point, so the expert is incentivized to be truthful.}
	\label{fig:quad_example}
\end{figure}

We will find \emph{Bregman divergence} to be a useful concept for some of our proofs.
\begin{defin}[Bregman divergence]
Given a differentiable, strictly convex function $G: S \to \RR$ with gradient $\vect{g}$, where $S$ is a convex subset of $\RR^n$, and given $\vect{p}, \vect{q} \in S$, the \emph{Bregman divergence} between $\vect{p}$ and $\vect{q}$ \emph{with respect to $G$} is
\[D_G(\vect{p} \parallel \vect{q}) := G(\vect{p}) - G(\vect{q}) - \angles{\vect{g}(\vect{q}), \vect{p} - \vect{q}}.\]
\end{defin}

(Note that Bregman divergence is not symmetric.) A geometric interpretation of $D_G(\vect{p} \parallel \vect{q})$ is: if you draw the tangent plane to $G$ at $\vect{q}$, how far below $G(\vect{p})$ the value of that plane will be at $\vect{p}$. For example, the distance between the red and blue points in Figure~\ref{fig:quad_example} is the Bregman divergence between the expert's belief $(0.7, 0.3)$ and their report $(0.4, 0.6)$. This gives the following interpretation of Bregman divergence in the context of proper scoring rules:

\begin{remark} \label{remark:bregman_scoring}
If $G$ is the expected reward function of a proper scoring rule, then $D_G(\vect{p} \parallel \vect{q})$ is the expected reward lost by reporting $\vect{q}$ when your belief is $\vect{p}$. Put otherwise, $D_G(\vect{p} \parallel \vect{q})$ measures the ``wrongness" of the report $\vect{q}$ relative to a correct answer of $\vect{p}$.
\end{remark}

\begin{prop}[Well-known facts about Bregman divergence] \leavevmode \label{prop:bregman_facts}
\begin{itemize}
\item $D_G(\vect{p} \parallel \vect{q}) \ge 0$, with equality only when $\vect{p} = \vect{q}$.
\item For any $\vect{q}$, $D_G(\vect{x} \parallel \vect{q})$ is a strictly convex function of $\vect{x}$.
\end{itemize}
\end{prop}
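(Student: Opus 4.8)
The plan is to dispatch both facts quickly, leveraging the first-order characterization of strict convexity, with the first part also admitting a clean interpretation through the properness of the associated scoring rule.

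For the first claim, I would observe that $D_G(\vect{p} \parallel \vect{q})$ is exactly the amount by which $G(\vect{p})$ exceeds the value at $\vect{p}$ of the tangent plane to $G$ at $\vect{q}$. The key fact is the standard first-order condition for strict convexity of a differentiable function: for differentiable $G$, strict convexity is equivalent to $G(\vect{p}) > G(\vect{q}) + \angles{\vect{g}(\vect{q}), \vect{p} - \vect{q}}$ for all $\vect{p} \neq \vect{q}$, with equality holding trivially when $\vect{p} = \vect{q}$. Rearranging this is precisely the assertion that $D_G(\vect{p} \parallel \vect{q}) > 0$ for $\vect{p} \neq \vect{q}$ and $= 0$ otherwise. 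Alternatively---and perhaps more in the spirit of this paper---I would invoke Remark~\ref{remark:bregman_scoring}: by Proposition~\ref{prop:G_convex}, $G$ is the expected reward function of the proper scoring rule $s$ defined via Equation~\ref{eq:s_from_g_2}, and a short computation using $\sum_j p(j)\delta_j = \vect{p}$ shows that $\EE[j \leftarrow \vect{p}]{s(\vect{q}; j)} = G(\vect{q}) + \angles{\vect{g}(\vect{q}), \vect{p} - \vect{q}}$. Subtracting this from the truthful reward $G(\vect{p})$ yields exactly $D_G(\vect{p} \parallel \vect{q})$, so nonnegativity and the equality condition follow immediately from the properness inequality.

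For the second claim, I would note that, viewed as a function of $\vect{x}$,
\[D_G(\vect{x} \parallel \vect{q}) = G(\vect{x}) - \angles{\vect{g}(\vect{q}), \vect{x}} + \big(\angles{\vect{g}(\vect{q}), \vect{q}} - G(\vect{q})\big),\]
where the last two terms are respectively linear and constant in $\vect{x}$. Thus $D_G(\cdot \parallel \vect{q})$ differs from $G$ only by an affine function, and since adding an affine function preserves strict convexity, $D_G(\vect{x} \parallel \vect{q})$ is strictly convex in $\vect{x}$. Neither step presents a genuine obstacle; the only point requiring care is the equality case in the first claim, where it is the strictness (rather than mere convexity) of $G$---equivalently, the strictness built into properness of $s$---that rules out $D_G(\vect{p} \parallel \vect{q}) = 0$ for $\vect{p} \neq \vect{q}$.
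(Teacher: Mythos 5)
Your proof is correct. The paper states these facts as ``well-known'' and gives no proof of its own, so there is nothing to compare against; your argument is exactly the standard one the authors are implicitly relying on. Both halves check out: the first claim is a restatement of the first-order characterization of strict convexity (and your alternative derivation via properness, using $\sum_j p(j)\delta_j = \vect{p}$ to show $\EE[j \leftarrow \vect{p}]{s(\vect{q}; j)} = G(\vect{q}) + \angles{\vect{g}(\vect{q}), \vect{p} - \vect{q}}$, is a nice fit with Remark~\ref{remark:bregman_scoring}), and the second claim follows since $D_G(\cdot \parallel \vect{q})$ differs from $G$ by an affine function of $\vect{x}$, which preserves strict convexity. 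You are also right to flag that the strictness of convexity (equivalently, strict properness) is what carries the equality case.
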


Finally, we make a note about interpreting the $n = 2$ outcome case in one dimension.

\begin{remark} \label{remark:n2}
Because $\Delta^n$ is $(n - 1)$-dimensional, we can think of the case of $n = 2$ outcomes in one dimension. All probabilities in are of the form $(p, 1 - p)$; we map $\Delta^2$ to $[0, 1]$ via the first coordinate. Thus, we let $G(p) := G(p, 1 - p)$. We let $g(p) := G'(p) = \angles{\vect{g}(p, 1 - p), (1, -1)}$. The tangent line to $G$ at $p$, e.g. as in Figure~\ref{fig:quad_example}, will intersect the line $x = 1$ at $s(p; 1)$ (i.e. the score if Outcome 1 happens) and intersect the line $x = 0$ at $s(p; 2)$ (i.e. the score if Outcome 2 happens). This formulation will be helpful when discussing the two-outcome case, e.g. in Section~\ref{sec:axiomatization}.
\end{remark}

\subsection{Quasi-arithmetic pooling}
We now introduce the central concept of this paper, quasi-arithmetic pooling.

\begin{defin}[quasi-arithmetic pooling] \label{def:qa_pooling}
	Let $s$ be a proper scoring rule with expected reward function $G$ and exposure function $\vect{g}$. Given forecasts $\vect{p}_1, \dots, \vect{p}_m \in \mathcal{D}$ with non-negative weights $w_1, \dots, w_m$ adding to $1$, the \emph{quasi-arithmetic (QA) pool} of these forecasts \emph{with respect to $s$} (or with respect to $\vect{g}$), denoted by $\sideset{}{_\vect{g}}\bigoplus\limits_{i = 1}^m (\vect{p}_i, w_i)$, is the unique $\vect{p}^* \in \Delta^n$ such that $\sum_{i = 1}^m w_i \vect{g}(\vect{p}_i)$ is a subgradient of $G$ at $\vect{p}^*$.
	
	If the forecasts and weights are clear from context, we may simply write $\vect{p}^*$ to refer to their quasi-arithmetic pool; or, if only the forecasts are clear, we may write $\vect{p}^*_{\vect{w}}$, where $\vect{w}$ is the vector of weights.
\end{defin}

\begin{remark}
Equivalently \cite[Theorem 23.5]{roc70_book}, we can define
\begin{equation} \label{eq:qa_min_1}
	\sideset{}{_\vect{g}}\bigoplus\limits_{i = 1}^m (\vect{p}_i, w_i) := \arg \min_{\vect{x}} G(\vect{x}) - \angles{\vect{x}, \sum_{i = 1}^m w_i \vect{g}(\vect{p}_i)}.
\end{equation}
Also equivalently, we can define
\[\sideset{}{_\vect{g}}\bigoplus\limits_{i = 1}^m (\vect{p}_i, w_i) := \arg \min_{\vect{x}} \sum_{i = 1}^m w_i D_G(\vect{x} \parallel \vect{p}_i).\]
This expression differs from the one in Equation~\ref{eq:qa_min_1} by a constant: namely, $\sum_i w_i(\angles{\vect{p}_i, \vect{g}(\vect{p}_i)} - G(\vect{p}_i))$.
\end{remark}

The QA-pool is well-defined, i.e.\ exists and is unique. It is unique because a strictly convex function cannot have the same subgradient at two different points \cite[Lemma 3.11]{waggoner21}. It exists because $G(\vect{x}) - \angles{\vect{x}, \sum_i w_i \vect{g}(\vect{p}_i)}$ is continuous and thus attains its minimum on the (compact) domain $\Delta^n$.

In light of the fact that $D_G(\vect{p} \parallel \vect{q})$ is the expected reward lost by an expert who, believing $\vect{p}$, reports $\vect{q}$ (see Remark~\ref{remark:bregman_scoring}), the Bregman divergence formulation of QA pooling gives another natural interpretation.

\begin{remark} \label{remark:bregman_interpretation}
	Consider a proper scoring rule $s$ with forecasts $\vect{p}_1, \dots, \vect{p}_m$ with weights $w_1, \dots, w_m$. The QA pool of these forecasts is the forecast $\vect{p}^*$ that, if it is the correct answer (i.e. if the outcome is drawn according to $\vect{p}^*$), would minimize the expected loss of a randomly chosen (according to $\vect{w}$) expert relative to reporting $\vect{p}^*$.
\end{remark}

In this sense, QA pooling reflects a \emph{compromise} between experts: it is the probability that, if it were correct, would make the experts' forecast least wrong overall.

\begin{remark}
	Since the Bregman divergence is convex in its first argument, computing the QA pool is a matter of convex optimization. In particular, given oracle access to $\vect{g}$, the ellipsoid method can be used to efficiently find the QA pool of a list of forecasts.
\end{remark}

Note that although Definition~\ref{def:qa_pooling} only specifies that $\vect{p}^* \in \Delta^n$, in fact it lies in $\mathcal{D}$:

\begin{claim}
	For any $\vect{p}_1, \dots, \vect{p}_m$ and $w_1, \dots, w_m$, the QA pool $\vect{p^*}$ lies in $\mathcal{D}$.
\end{claim}

\begin{proof}
	Let $\vect{x} \in \Delta^n \setminus \mathcal{D}$. We show that $\sum_i w_i \vect{g}(\vect{p}_i)$ is not a subgradient of $G$ at $\vect{x}$.
	
	We have $s(\vect{x}; j) = -\infty$ for some $j$ (satisfying $x_j = 0$), so $\angles{\vect{g}(\vect{x}), \delta_j - \vect{x}} = -\infty$. Since $\vect{g}$ is continuous, for sufficiently small $\epsilon$, we have
	\[\angles{\vect{g}(\vect{x} + \epsilon'(\delta_j - \vect{x})), \delta_j - \vect{x}} < \angles{\sum_i w_i \vect{g}(\vect{p}_i), \delta_j - \vect{x}}\]
	for all $\epsilon' \le \epsilon$. This means that
	\[G(\vect{x} + \epsilon(\delta_j - \vect{x})) - G(\vect{x}) < \angles{\sum_i w_i \vect{g}(\vect{p}_i), \epsilon(\delta_j - \vect{x})},\]
	so $\sum_i w_i \vect{g}(\vect{p}_i)$ is not a subgradient of $G$ at $\vect{x}$, as desired.
\end{proof}

While our max-min optimality result (Section~\ref{sec:max_min}) holds unconditionally, our results in Section~\ref{sec:convex_losses} and \ref{sec:axiomatization} require that our proper scoring rule $s$ satisfy a property that we term \emph{convex exposure}.

\begin{defin}[convex exposure] \label{def:convex_exposure}
	A proper scoring rule $s$ with forecast domain $\mathcal{D}$ has \emph{convex exposure} if the range of its exposure function $\vect{g}$ on $\mathcal{D}$ is a convex set.
\end{defin}

The key fact about proper scoring rules with convex exposure is that for all $\vect{p}_1, \dots, \vect{p}_m$ and $w_1, \dots, w_m$, $\sum_i w_i \vect{g}(\vect{p}_i) = \vect{g}(\vect{x})$ for some $\vect{x} \in \mathcal{D}$. This means that $\sum_i w_i \vect{g}(\vect{p}_i)$ is a subgradient of $G$ at $\vect{x}$, so $\vect{x}$ is the weighted QA pool of the $p_i$'s. In other words, we may write
\begin{equation} \label{eq:g_convex_exposure}
	\vect{g}(\vect{p}^*) = \sum_i w_i \vect{g}(\vect{p}_i),
\end{equation}
where $\vect{p}^*$ is the weighted QA pool of the given forecasts. The convex exposure property thus allows us to write down relations between exposures of forecasts that would otherwise not necessarily be true.

The quadratic and logarithmic scoring rules, as well as all proper scoring rules for binary outcomes, have convex exposure. In Appendix~\ref{appx:convex_exposure} we explore in more depth the question of which commonly used proper scoring rules have convex exposure property.

\begin{remark} \label{remark:mod_T1n}
Because the domain of $G$ is a subset of $\Delta^n$ (and thus lies in a plane that is orthogonal to the all-ones vector $\vect{1}_n$), it makes the most sense to think of its gradient function $\vect{g}$ as taking on values in $\RR^n$ modulo translation by the all-ones vector $\vect{1}_n$; we will denote this space by $\RR^n/T(\vect{1}_n)$. Sometimes we find it convenient to treat $G$ as a function of $n$ variables rather than $n - 1$ variables out of convenience, thus artificially extending the domain of $G$ outside of the plane containing $\Delta^n$. The component of the gradient of $G$ that is parallel to $\vect{1}_n$ is not relevant.\footnote{Formally, consider the change of coordinates given by $z_j = x_n - x_j$ for $j \le n - 1$ and $z_n = \sum_j x_j$, so that the domain of $G$ lies in the plane $z_n = 1$. Then for $j \le n - 1$, $\partl{G}{z_j}$ at a given point in the domain of $G$ does not change if $1$ is substituted for $z_n$; only $\partl{G}{z_n}$ changes (to zero). Equivalently in terms of our original coordinates, the change that $\vect{g}$ undergoes when we consider $G$ to be a function only defined on $\mathcal{D}$ instead of $\RR^n$ is precisely a projection of $\vect{g}$ onto $\{\vect{x} \in \RR^n: \sum_i x_i = 0\}$.}
\end{remark}
\section{QA pooling as a max-min optimal method} \label{sec:max_min}
Our goal is to give a formal justification for quasi-arithmetic pooling. Remark~\ref{remark:bregman_interpretation} established that the QA pool is optimizes (i.e.\ minimizes) the weighted average Bregman divergence to the experts' forecasts. This section gives another justification for QA pooling in terms of max-min optimality. We will give additional justifications in later sections.

\begin{theorem} \label{thm:max_min}
Let $s$ be a proper scoring rule and let $\vect{g}$ be the exposure function of $s$. Fix any forecasts $\vect{p}_1, \dots, \vect{p}_m \in \mathcal{D}$ with non-negative weights $w_1, \dots, w_m$ adding to $1$. Define
\[u(\vect{p}; j) := s(\vect{p}; j) - \sum_{i = 1}^m w_i s(\vect{p}_i; j).\]
Then the quantity $\min_j u(\vect{p}; j)$ is uniquely maximized by setting $\vect{p}$ to $\vect{p}^* := \sideset{}{_\vect{g}}\bigoplus\limits_{i = 1}^m (\vect{p}_i, w_i)$. Furthermore, this minimum (across outcomes $j$) is achieved simultaneously by all $j$ with $p^*_j > 0$. This quantity is non-negative, and is positive unless all reports $\vect{p}_i$ with positive weights are equal.
\end{theorem}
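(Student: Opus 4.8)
The plan is to work entirely from the Savage representation (Equation~\ref{eq:s_from_g_2}) and to certify the max-min value by taking expectations against the distribution $\vect{p}^*$ itself. Substituting $s(\vect{p};j) = G(\vect{p}) + \angles{\vect{g}(\vect{p}), \delta_j - \vect{p}}$ into the definition of $u$, the only dependence on $j$ sits in the term $\angles{\vect{g}(\vect{p}) - \sum_i w_i \vect{g}(\vect{p}_i), \delta_j}$. At $\vect{p} = \vect{p}^*$, the defining relation (Equation~\ref{eq:aggr_def}) makes $\vect{g}(\vect{p}^*) - \sum_i w_i \vect{g}(\vect{p}_i)$ a scalar multiple $c\vect{1}_n$ (recall Remark~\ref{remark:mod_T1n}), so this term equals $\angles{c\vect{1}_n, \delta_j} = c$ for every $j$. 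Hence $u(\vect{p}^*; j)$ is independent of $j$; I will call this common value $V$, which already settles the claim that $u(\vect{p}^*; j)$ is the same for all $j$.

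The key identity I would then establish is obtained by averaging $u(\vect{p}; \cdot)$ over outcomes drawn from $\vect{p}^*$, for an arbitrary report $\vect{p}$. Using Remark~\ref{remark:bregman_scoring} in the form $\sum_j p^*(j) s(\vect{q}; j) = G(\vect{p}^*) - D_G(\vect{p}^* \parallel \vect{q})$, applied once to $\vect{q} = \vect{p}$ and once to each $\vect{q} = \vect{p}_i$, the $G(\vect{p}^*)$ contributions cancel (using $\sum_i w_i = 1$) and I get
\[ \sum_{j=1}^n p^*(j)\, u(\vect{p}; j) = \sum_{i=1}^m w_i\, D_G(\vect{p}^* \parallel \vect{p}_i) - D_G(\vect{p}^* \parallel \vect{p}). \]
Setting $\vect{p} = \vect{p}^*$ kills the last divergence, and since $u(\vect{p}^*; \cdot) \equiv V$ the left side is just $V$, giving $V = \sum_i w_i D_G(\vect{p}^* \parallel \vect{p}_i)$. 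By Proposition~\ref{prop:bregman_facts} each summand is non-negative, so $V \ge 0$, with $V = 0$ exactly when every positively-weighted $\vect{p}_i$ equals $\vect{p}^*$ — equivalently, when all positively-weighted reports coincide. This handles the non-negativity and strict-positivity assertions.

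For the max-min optimality, I read the same displayed identity for an arbitrary $\vect{p} \ne \vect{p}^*$: it says $\sum_j p^*(j)\, u(\vect{p}; j) = V - D_G(\vect{p}^* \parallel \vect{p})$, which is strictly less than $V$ because $D_G(\vect{p}^* \parallel \vect{p}) > 0$ by Proposition~\ref{prop:bregman_facts}. The left side is a convex combination of the numbers $u(\vect{p}; j)$ with weights $p^*(j)$, so an average strictly below $V$ forces at least one outcome $j$ (with $p^*(j) > 0$) to satisfy $u(\vect{p}; j) < V$. Therefore $\min_j u(\vect{p}; j) < V = \min_j u(\vect{p}^*; j)$, proving that $\vect{p}^*$ is the unique maximizer. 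Conceptually, this is a minimax certificate: $\vect{p}^*$ is simultaneously the optimal report and the adversary's best mixed response over outcomes, which is what makes the argument so short.

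The substantive idea is discovering that expectation against $\vect{p}^*$ collapses everything into Bregman divergences; after that, the only point requiring care is degenerate bookkeeping. Because $\vect{p}^*$ may lie on the boundary of $\mathcal{D}$, some coordinates $p^*(j)$ can vanish, so the final ``average below $V$ forces a coordinate below $V$'' step must be phrased over the support of $\vect{p}^*$ rather than over all $j$; this is harmless since at least one $p^*(j)$ is positive. I would also remark once at the outset that $u$ is insensitive to the representative of $\vect{g}$ modulo $T(\vect{1}_n)$, since replacing $\vect{g}(\vect{p})$ by $\vect{g}(\vect{p}) + c\vect{1}_n$ changes $s(\vect{p};j)$ by $c(1 - \sum_k p(k)) = 0$, which legitimizes invoking Equation~\ref{eq:aggr_def} in the first paragraph.
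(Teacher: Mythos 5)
Your proof is correct and follows essentially the same route as the paper's: compute $u(\vect{p}^*;\cdot)$ via the Savage representation to see it is constant, identify that constant as $\sum_i w_i D_G(\vect{p}^* \parallel \vect{p}_i)$, and certify optimality by taking expectations against $\vect{p}^*$. Your single identity $\sum_j p^*(j)\,u(\vect{p};j) = \sum_i w_i D_G(\vect{p}^*\parallel\vect{p}_i) - D_G(\vect{p}^*\parallel\vect{p})$ packages the paper's final appeal to propriety (which it phrases as ``$s(\vect{q};j)\ge s(\vect{p}^*;j)$ for all $j$ forces $\vect{q}=\vect{p}^*$'') in explicit Bregman form, but the underlying argument is the same.
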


One interpretation for this theorem statement is as follows. Consider an agent who is tasked with submitting a forecast, and who will be paid according to $s$. The agent decides to sub-contract $m$ experts to get their opinions, paying expert $i$ the amount $w_i s(\vect{p}_i; j)$ if the expert reports $\vect{p}_i$ and outcome $j$ happens. (Perhaps experts whom the agent trusts more have higher $w_i$'s.) Finally, the agent reports some (any) forecast $\vect{p}$. Then $u(\vect{p}; j)$ is precisely the agent's profit (utility).

The quantity $\min_j u(\vect{p}; j)$ is the agent's minimum possible profit over all outcomes. It is natural to ask which report $\vect{p}$ maximizes this quantity. Theorem~\ref{thm:max_min} states that this maximum is achieved by the QA pool of the experts' forecasts with respect to $s$, and that this is the unique maximizer.

A possible geometric intuition to keep in mind for the proof (below): for each expert $i$, draw the plane tangent to $G$ at $\vect{p}_i$. For any $j$, the value of this plane at $\delta_j$ is $s(\vect{p}_i; j)$. Now take the weighted average of all $m$ planes; this is a new plane whose intersection with any $\delta_j$ is the total reward received by the experts if $j$ happens. Since $G$ is convex, this plane lies below $G$. To figure out which point maximizes the agent's guaranteed profit, push the plane upward until it hits $G$. It will hit $G$ at $\vect{p}^*$ and the agent's worst-case profit will be the vertical distance that the plane was pushed.

\begin{proof}[Proof of Theorem~\ref{thm:max_min}]
By Equation~\ref{eq:qa_min_1}, computing the QA pool amounts to finding the minimizer $\vect{p}^*$ of the function $G(\vect{x}) - \angles{\vect{x}, \sum_{i = 1}^m w_i \vect{g}(\vect{p}_i)}$ over $\Delta^n$. If $\vect{p}^*$ is in the interior of $\Delta^n$, then this expression is differentiable at $\vect{p}^*$. If $\vect{p}^*$ is on the boundary, then the expression can be extended to a differentiable function in a neighborhood around $\vect{p}^*$.\footnote{This follows e.g.\ from \cite[Theorem 1.8]{am15}, where we take $C$ in the theorem statement to be a compact subset of $\Delta^n$ containing $\vect{p}^*$ where $G$ is differentiable. Here we use that $\vect{p}^* \in \mathcal{D}$; by continuity of $\vect{g}$, such a subset necessarily exists.} Thus, applying the KKT conditions \cite[\S5.5.3]{bv04} tells us that
\[\vect{g}(\vect{p}^*) = \sum_{i = 1}^m w_i \vect{g}(\vect{p}_i) + \lambda \vect{1}_n + \pmb{\mu}\]
for some $\lambda \in \RR$ and $\pmb{\mu} \in \RR^n$ such that $\mu_j \ge 0$ and $\mu_j = 0$ if $p^*_j > 0$. We therefore have
\begin{align*}
	u(\vect{p}^*; j) &= s(\vect{p}^*; j) - \sum_i w_i s(\vect{p}_i; j)\\
	&= G(\vect{p}^*) + \angles{\vect{g}(\vect{p}^*), \delta_j - \vect{p}^*} - \sum_i w_i (G(\vect{p}_i) + \angles{\vect{g}(\vect{p}_i), \delta_j - \vect{p}_i})\\
	&= G(\vect{p}^*) - \sum_i w_i G(\vect{p}_i) + \angles{\sum_i w_i \vect{g}(\vect{p}_i) + \lambda \vect{1}_n + \pmb{\mu}, \delta_j - \vect{p}^*} - \sum_i w_i \angles{\vect{g}(\vect{p}_i), \delta_j - \vect{p}_i}\\
	&= G(\vect{p}^*) - \sum_i w_i G(\vect{p}_i) + \sum_i w_i \angles{\vect{g}(\vect{p}_i), \vect{p}_i - \vect{p}^*} + \angles{\lambda \vect{1}_n + \pmb{\mu}, \delta_j - \vect{p}^*}\\
	&= G(\vect{p}^*) - \sum_i w_i G(\vect{p}_i) + \sum_i w_i \angles{\vect{g}(\vect{p}_i), \vect{p}_i - \vect{p}^*} + \mu_j\\
	&= \sum_i w_i D_G(\vect{p}^* \parallel \vect{p}_i) + \mu_j.
\end{align*}
The second-to-last step follows from the fact that $\angles{\lambda \vect{1}_n, \delta_j - \vect{p}^*} = 0$ and $\angles{\pmb{\mu}, \vect{p}^*} = 0$, and the last step follows by the definition of Bregman divergence. Since $\mu_j = 0$ for every $j$ such that $p^*_j > 0$ (of which there is at least one), the minimum of $u(\vect{p}^*; j)$ over $j$ is achieved simultaneously for all $j$ with $p^*_j > 0$, and this minimum is equal to $\sum_i w_i D_G(\vect{p}^* \parallel \vect{p}_i)$. This quantity is non-negative, and positive except when all $\vect{p}_i$'s with positive weights are equal.

Finally we show that $\vect{x} = \vect{p}^*$ maximizes $\min_j u(\vect{x}; j)$. Suppose that for some report $\vect{q}$ we have that $\min_j u(\vect{q}; j) \ge \min_j u(\vect{p}^*; j)$. Then $u(\vect{q}; j) \ge u(\vect{p}^*; j)$ for \emph{every} $j$ such that $p^*_j > 0$. But this means that the expected score (according to $s$) of an expert who believes $\vect{p}^*$ is larger if the expert reports $\vect{q}$ than if the expert reports $\vect{p}^*$. This contradicts the fact that $s$ is proper.
\end{proof}

\begin{remark}
We can reformulate Theorem~\ref{thm:max_min} as follows: suppose that an agent has access to forecasts $\vect{p}_1, \dots, \vect{p}_m$ and needs to issue a forecast, for which the agent will be rewarded using a proper scoring rule $s$. The agent can improve upon selecting an expert at random according to weights $w_1, \dots, w_m$, no matter the outcome $j$, by reporting $\vect{p}^*$. This improvement is the same no matter the outcome -- so long as the outcome is assigned positive probability by $\vect{p}^*$ -- and is a strict improvement unless all forecasts with positive weights are the same.
\end{remark}

\begin{remark}
	Theorem~\ref{thm:max_min} is closely related to work by Gr\"{u}nwald and Dawid that establishes a connection between entropy maximization and worst-case expected loss minimization \cite{gd04}. Their work studies a generalized notion of entropy functions that, for the case of a proper scoring rule $s$, is equal to the negative of the expected reward function $G$. They show that the forecast $\vect{x}$ that maximizes entropy (minimizes $G$) also maximizes an expert's worst-case score (over outcomes $j$). Considering instead the entropy function $\angles{x, \sum_i w_i \vect{g}(\vect{p}_i)} - G(\vect{x})$ yields results that are very similar to our Theorem~\ref{thm:max_min}. In particular, if $\mathcal{D}$ is a closed set or an open set, then the max-min result in Theorem~\ref{thm:max_min} can be derived from \cite[Theorem 5.2]{gd04} and \cite[Theorem 6.2]{gd04}, respectively. However, our proof below captures cases that their results do not address. See also \cite[Lemma 1]{chl19}, from which (upon considering the scoring rule with expected reward function $G(\vect{x}) - \angles{x, \sum_i w_i \vect{g}(\vect{p}_i)}$) yields the simultaneity result of Theorem~\ref{thm:max_min} if $\vect{p}^*$ lies in the interior of $\Delta^n$.
\end{remark}

\section{Convex losses and learning expert weights} \label{sec:convex_losses}
Thus far, when discussing QA pooling, we have regarded expert weights as given. Where do these weights come from? As demonstrated by the results in this section, if the proper scoring rule $s$ has convex exposure (see Definition~\ref{def:convex_exposure}), these weights can be learned from experience. This learning property for weights falls out of the following key observation, which states that an agent's score is a concave function of the weights it uses for the experts.

\begin{theorem} \label{thm:ws_concave}
Let $s$ be a proper scoring rule with convex exposure and forecast domain $\mathcal{D}$, and fix any $\vect{p}_1, \dots, \vect{p}_m \in \mathcal{D}$. Given a weight vector $\vect{w} = (w_1, \dots, w_m) \in \Delta^m$, define the \emph{weight-score} of $\vect{w}$ for an outcome $j$ as
\[\ws_j(\vect{w}) := s \parens{\sideset{}{_\vect{g}}\bigoplus_{i = 1}^m (\vect{p}_i, w_i); j}.\]
Then for every $j \in [n]$, $\ws_j(\vect{w})$ a concave function of $\vect{w}$.
\end{theorem}

We defer the proof, along with related commentary and results, to Appendix~\ref{appx:convex_losses}. The basic idea of the proof is that for any two weight vectors $\vect{v}$ and $\vect{w}$ and any $c \in [0, 1]$, the quantity
\[\ws_j(c\vect{v} + (1 - c)\vect{w}) - c \ws_j(\vect{v}) - (1 - c) \ws_j(\vect{w})\]
can be expressed as a sum of Bregman divergences, and is therefore non-negative.

\begin{remark}
Theorem~\ref{thm:ws_concave} would \emph{not} hold if the pooling operator in the definition of $\ws$ were replaced by linear pooling or by logarithmic pooling.\footnote{For a counterexample to logarithmic pooling, consider $n = 2$, let $s$ be the quadratic scoring rule, $p_1 = (0.1, 0.9)$, $p_2 = (0.5, 0.5)$, $j = 1$, $\vect{v} = (1, 0)$, $\vect{w} = (0, 1)$, and $c = \frac{1}{2}$. For a counterexample to linear pooling, consider $n = 2$, let $s$ be given by $G(p_1, p_2) = \sqrt{p_1^2 + p_2^2}$ (this is known as the \emph{spherical} scoring rule), $p_1 = (0, 1)$, $p_2 = (0.2, 0.8)$, $j = 1$, $\vect{v} = (1, 0)$, $\vect{w} = (0, 1)$, and $c = \frac{1}{2}$.} This is an advantage of QA pooling over using the linear or logarithmic method irrespective of the scoring rule.
\end{remark}

We now state the no-regret result that we have alluded to. The algorithm referenced in the statement is an application of the standard online gradient descent algorithm (see e.g. \cite[Theorem 3.1]{hazan19}) to our particular setting. We defer both the algorithm and the proof of Theorem~\ref{thm:no_regret} to Appendix~\ref{appx:convex_losses}.

\begin{theorem} \label{thm:no_regret}
Let $s$ be a bounded proper scoring rule with convex exposure and forecast domain $\mathcal{D}$. For time steps $t = 1 \dots T$, an agent chooses a weight vector $\vect{w}^t \in \Delta^m$. The agent then receives a score of
\[s \parens{\sideset{}{_\vect{g}}\bigoplus_{i = 1}^m (\vect{p}_i^t, w_i^t); j^t},\]
where $\vect{p}_1^t, \dots, \vect{p}_m^t \in \mathcal{D}$ and $j^t \in [n]$ are chosen adversarially. By choosing $\vect{w}^t$ according to Algorithm~\ref{alg:ogd} (online gradient descent on the experts' weights), the agent achieves $O(\sqrt{T})$ regret in comparison with the best weight vector in hindsight. In particular, if $M$ is an upper bound\footnote{This bound $M$ exists because $s$ is bounded by assumption, and so $\vect{g}$ is also bounded (this follows from Equation~\ref{eq:s_from_g_2}).} on $\norm{\vect{g}}_2$, then for every $\vect{w}^* \in \Delta^m$ we have
\[\sum_{t = 1}^T s \parens{\sideset{}{_\vect{g}}\bigoplus_{i = 1}^m (\vect{p}_i^t, w_i^*); j^t} - s \parens{\sideset{}{_\vect{g}}\bigoplus_{i = 1}^m (\vect{p}_i^t, w_i^t); j^t} \le 3\sqrt{m}M\sqrt{T}.\]
\end{theorem}

This result is quite strong in that it does not merely achieve low regret compared to the best \emph{expert}, but in fact compared to the best possible \emph{weighted pool of experts} in hindsight. This is a substantial distinction, as it is possible for a mixture of experts to substantially outperform any individual expert.
\section{Axiomatization of QA pooling} \label{sec:axiomatization}
In this section, we aim to show that the class of all quasi-arithmetic pooling operators is a natural one, by showing that these operators are precisely those which satisfy a natural set of axioms.

\cite{kol30} and \cite{nag30} independently considered the class of quasi-arithmetic means. Given an interval $I \subseteq \RR$ and a continuous, injective function $f: I \to \RR$, the \emph{quasi-arithmetic mean with respect to $f$}, or \emph{$f$-mean}, is the function $M_f$ that takes as input $x_1, \dots, x_m \in I$ (for any $m \ge 1$) and outputs
\[M_f(x_1, \dots, x_m) := f^{-1} \parens{\frac{f(x_1) + \dots + f(x_m)}{m}}.\]
For example, the arithmetic mean corresponds to $f(x) = x$; the quadratic to $f(x) = x^2$; the geometric to $f(x) = \log x$; and the harmonic to $f(x) = \frac{-1}{x}$.

Kolmogorov proved that the class of quasi-arithmetic means is precisely the class of functions $M: \bigcup_{m = 1}^\infty I^m \to I$ satisfying the following natural properties:\footnote{Nagumo also provided a characterization, though with slightly different properties.}

\begin{enumerate}[label=(\arabic*)]
\item $M(x_1, \dots, x_m)$ is continuous and strictly increasing in each variable.
\item $M$ is symmetric in its arguments.
\item $M(x, x, \dots, x) = x$.
\item $M(x_1, \dots, x_k, x_{k + 1}, \dots, x_m)$ = $M(y, \dots, y, x_{k + 1}, \dots, x_m)$, where $y := M(x_1, \dots, x_k)$ appears $k$ times on the right-hand side. Informally, a subset of arguments to the mean function can be replaced with their mean.
\end{enumerate}

The four properties listed above can be viewed as an \emph{axiomatization} of quasi-arithmetic means.\\

Our notion of quasi-arithmetic pooling is exactly that of a quasi-arithmetic mean, except that it is more general in two ways. First, it allows for weights to accompany the arguments to the mean. Second, we are considering quasi-arithmetic means with respect to \emph{vector-valued} functions $\vect{g}$. In the $n = 2$ outcome case, $\vect{g}$ can be considered a scalar-valued function since it is defined on a one-dimensional space (see Remark~\ref{remark:n2} for details); but in general we cannot treat $\vect{g}$ as scalar-valued.\footnote{Another difference is that in the $n = 2$ case, we are restricting $\mathcal{D}$ to be an interval from $0$ to $1$, though this is not a fundamental difference for the purposes of this section.}

Our goal is to extend the above axiomatization of quasi-arithmetic means in these two ways: first (in Section~\ref{sec:axioms_n2}) to include weights as arguments, and second (in Section~\ref{sec:axioms_general_n}) to general $n$ (while still allowing arbitrary weights).\\

\subsection{Generalizing to include weights as arguments} \label{sec:axioms_n2}
The objects that we will be studying in this section are ones of the form $(p, w)$, where $w \ge 0$ and $p \in \mathcal{D}$. In this subsection, $\mathcal{D}$ is a two-outcome forecast domain, whose elements we will identify with by the probability of the first outcome (see Remark~\ref{remark:n2}).\footnote{Proper scoring rules for two outcomes can have four possible forecast domains: $[0, 1]$, $[0, 1)$, $(0, 1]$, and $(0, 1)$.} We will fix the set $\mathcal{D}$ for the remainder of the subsection. Our results generalize to any interval of $\RR$ (as in Kolmogorov's work), but we focus on forecast domains since that is our application.

\begin{defin}
A \emph{weighted forecast} is an element of $\mathcal{D} \times \RR_{>0}$: a probability and a positive weight. Given a weighted forecast $\Pi = (p, w)$ we define $\prb(\Pi) := p$ and $\wt(\Pi) := w$.
\end{defin}

We will thinking of the output of pooling operators as weighted forecasts. This is a simple extension of our earlier definition of quasi-arithmetic pooling (Definition~\ref{def:qa_pooling}), which only output a probability.

\begin{defin}[Quasi-arithmetic pooling with arbitrary weights ($n = 2$)] \label{def:qa_ax_n2}
Given a continuous, strictly increasing function $g: \mathcal{D} \to \RR$, and weighted forecasts $\Pi_1 = (p_1, w_1), \dots, \Pi_m = (p_m, w_m)$, define the \emph{quasi-arithmetic pool} of $\Pi_1, \dots, \Pi_m$ with respect to $g$ as
\[\sideset{}{_g}\bigoplus_{i = 1}^m (p_i, w_i) := \parens{g^{-1} \parens{\frac{\sum_i w_i g(p_i)}{\sum_i w_i}}, \sum_i w_i}.\]
\end{defin}

\begin{remark}
The equivalence of this notion to our earlier one uses the fact that all (continuous) proper scoring rules for binary outcomes have convex exposure (Proposition~\ref{prop:n2_convex_exposure}), so the relationship between $g(p^*)$ and $g(p)$ may be written as in Equation~\ref{eq:g_convex_exposure}.

In the case that $\sum_i w_i = 1$, Definition~\ref{def:qa_ax_n2} reduces to Definition~\ref{def:qa_pooling}. In general, by linearly scaling the weights in Definition~\ref{def:qa_ax_n2} to add to $1$, we recover quasi-arithmetic pooling as previously defined.
\end{remark}

We find the following fact useful.

\begin{prop}
Given two continuous, strictly increasing functions $g_1$ and $g_2$, $\oplus_{g_1}$ and $\oplus_{g_2}$ are the same if and only if $g_2 = ag_1 + b$ for some $a > 0$ and $b \in \RR$.
\end{prop}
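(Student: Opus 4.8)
The plan is to first strip away the weight coordinate. For any input weighted forecasts $(p_1,w_1),\dots,(p_m,w_m)$, the second coordinate of the output of $\oplus_g$ is $\sum_i w_i$, which does not depend on $g$ at all. Hence $\oplus_{g_1}$ and $\oplus_{g_2}$ agree if and only if their probability (first) coordinates agree on every input, i.e.
\[
g_1^{-1}\parens{\frac{\sum_i w_i g_1(p_i)}{\sum_i w_i}} = g_2^{-1}\parens{\frac{\sum_i w_i g_2(p_i)}{\sum_i w_i}}
\]
for all weighted forecasts. So I only need to characterize when this identity holds.

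For the ``if'' direction I would substitute $g_2 = ag_1 + b$ (with $a>0$) directly, using that $g_2^{-1}(z) = g_1^{-1}\parens{(z-b)/a}$. Writing $\bar g := \parens{\sum_i w_i g_1(p_i)}/\parens{\sum_i w_i}$, the weighted average of $g_2(p_i) = a\,g_1(p_i)+b$ is exactly $a\bar g + b$, and applying $g_2^{-1}$ collapses this to $g_1^{-1}(\bar g)$. Thus the probability coordinates coincide; the point is simply that a positive affine map commutes with weighted averaging, so the transformation cancels against its inverse.

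For the ``only if'' direction I would specialize to $m=2$ and set $\phi := g_2 \circ g_1^{-1}$, a continuous, strictly increasing function on the interval $g_1(\mathcal{D})$. Applying $g_2$ to both sides of the reduced identity and writing $x_1 := g_1(p_1)$, $x_2 := g_1(p_2)$, $\lambda := w_1/(w_1+w_2)$, the identity becomes
\[
\phi(\lambda x_1 + (1-\lambda)x_2) = \lambda\,\phi(x_1) + (1-\lambda)\,\phi(x_2).
\]
Since $w_1,w_2$ range over all positive reals, $\lambda$ ranges over all of $(0,1)$, and since $p_1,p_2$ range over $\mathcal{D}$, the points $x_1,x_2$ range over all of $g_1(\mathcal{D})$. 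An identity of this form holding for \emph{every} $\lambda\in(0,1)$ says precisely that the graph of $\phi$ over any two points is the line segment joining them, so $\phi$ is affine: $\phi(y)=ay+b$. Monotonicity of $\phi$ (a composition of strictly increasing functions) forces $a>0$, and unravelling $\phi = g_2\circ g_1^{-1}$ gives $g_2 = a g_1 + b$.

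The only step with any content is the functional-equation step, and even there the ``main obstacle'' is mild: because all convex weights $\lambda\in(0,1)$ are simultaneously available (not merely the midpoint $\lambda=\tfrac12$), the conclusion that $\phi$ is affine is immediate from the definition of an affine function and needs no regularity bootstrapping. Continuity of $\phi$ is available as a fallback and would suffice even if one restricted attention to rational or dyadic $\lambda$, via the standard argument that a continuous solution of Jensen's equation is affine; but the full range of $\lambda$ makes this unnecessary.
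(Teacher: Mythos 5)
Your proof is correct. The paper itself omits the proof, remarking only that it is ``straightforward,'' and your argument is exactly the standard one it presumably has in mind: the ``if'' direction is the observation that a positive affine map commutes with weighted averaging, and the ``only if'' direction reduces to Jensen's functional equation $\phi(\lambda x_1 + (1-\lambda)x_2) = \lambda\phi(x_1) + (1-\lambda)\phi(x_2)$ for $\phi = g_2 \circ g_1^{-1}$, which with all $\lambda \in (0,1)$ available forces $\phi$ to be affine on the interval $g_1(\mathcal{D})$ with no regularity bootstrapping needed. The one hypothesis worth flagging explicitly is that $\mathcal{D}$ is a nondegenerate interval (guaranteed since a two-outcome forecast domain is one-dimensional), so that $g_1(\mathcal{D})$ contains more than one point and the affine coefficients $a, b$ are actually pinned down.
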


\begin{proof}
Clearly if $g_2 = ag_1 + b$ for some $a > 0$ and $b \in \RR$ then $\oplus_{g_1}$ and $\oplus_{g_2}$ are the same. For the converse, suppose that no such $a$ and $b$ exist. Let $x < y$ be such that $g_1$ and $g_2$ are not equal (even up to positive affine transformation) on $[x, y]$. Let $g_2'$ be the positive affine transformation of $g_2$ that makes it equal to $g_1$ at $x$ and $y$, and let $z \in (x, y)$ be such that $g_1(z) \neq g_2'(z)$. Let $\alpha$ be such that $g_1(z) = \alpha g_1(x) + (1 - \alpha) g_1(y)$. Then $(x, \alpha) \oplus_{g_1} (y, 1 - \alpha) = (z, 1)$, but $(x, \alpha) \oplus_{g_2} (y, 1 - \alpha) \neq (z, 1)$, so $\oplus_{g_1}$ and $\oplus_{g_2}$ are different.
\end{proof}

We now define properties (i.e. axioms) of a pooling operator $\oplus$, such that these properties are satisfied if and only if $\oplus$ is $\oplus_g$ for some $g$. Our axiomatization will look somewhat different from Kolmogorov's, in part because we choose to define $\oplus$ as a binary operator that (if it satisfies the associativity axiom) extends to the $m$-ary case. This is a simpler domain and will simplify notation. In Appendix~\ref{appx:ax} we exhibit an equivalent set of axioms that more closely resembles Kolmogorov's.

\begin{defin}[Axioms for pooling operators ($n = 2$)] \label{def:wop_n2}
For a pooling operator $\oplus$ on $\mathcal{D}$ (i.e. a binary operator on weighted forecasts), we define the following axioms.
\begin{enumerate}
\item \textbf{Weight additivity}: $\wt(\Pi_1 \oplus \Pi_2) = \wt(\Pi_1) + \wt(\Pi_2)$ for every $\Pi_1, \Pi_2$.
\item \textbf{Commutativity}: $\Pi_1 \oplus \Pi_2 = \Pi_2 \oplus \Pi_1$ for every $\Pi_1, \Pi_2$.
\item \textbf{Associativity}: $\Pi_1 \oplus (\Pi_2 \oplus \Pi_3) = (\Pi_1 \oplus \Pi_2) \oplus \Pi_3$ for every $\Pi_1, \Pi_2, \Pi_3$.
\item \textbf{Continuity}: For every $p_1, p_2$, the quantity\footnote{We allow one weight to be $0$ by defining $(p, w) \oplus (q, 0) = (q, 0) \oplus (p, w) = (p, w)$.} $\prb((p_1, w_1) \oplus (p_2, w_2))$
is a continuous function of $(w_1, w_2)$ on $\RR_{\ge 0}^2 \setminus \{(0, 0)\}$.
\item \textbf{Idempotence}: For every $\Pi_1, \Pi_2$, if $\prb(\Pi_1) = \prb(\Pi_2)$ then $\prb(\Pi_1 \oplus \Pi_2) = \prb(\Pi_1)$.
\item \textbf{Monotonicity}: Let $w > 0$ and let $p_1 > p_2 \in \mathcal{D}$. Then for $x \in (0, w)$, the quantity $\prb((p_1, x) \oplus (p_2, w - x))$ is a strictly increasing function of $x$.
\end{enumerate}
\end{defin}

The motivation for the weight additivity axiom is that the weight of a weighted forecast can be thought of as the \emph{amount of evidence} for its prediction. When pooling weighted forecasts, the weight of an individual forecast can be thought of as the strength of its vote in the aggregate.

The monotonicity axiom essentially states that if one pools two forecasts with different probabilities and a fixed total weight, then the larger the share of the weight belonging to the larger of the two probabilities, the larger the aggregate probability.\\

We now state and prove this section's main result: these axioms describe the class of QA pooling operators.

\begin{theorem} \label{thm:representation_n2}
A pooling operator is a QA pooling operator (as in Definition~\ref{def:qa_ax_n2}) with respect to some $g$ if and only if it satisfies the axioms in Definition~\ref{def:wop_n2}.\footnote{
As we mentioned, for an associative pooling operator $\oplus$, $\Pi_1 \oplus \Pi_2 \dots \oplus \Pi_m$ is a well-specified quantity, even without indicating parenthesization. This lets us use the notation $\bigoplus_{i = 1}^m \Pi_i$. This is why the statement of Theorem~\ref{thm:representation_n2} makes sense despite pooling operators not being $m$-ary by default.}
\end{theorem}

We will use $\oplus$ (without a $g$ subscript) to denote an arbitrary pooling operator that satisfies the axioms in Definition~\ref{def:wop_n2}. Before presenting the proof of Theorem~\ref{thm:representation_n2}, we will note a few important facts about weighted forecasts and pooling operators. First, we find it natural to define a notion of multiplying a weighted forecast pair by a positive constant.
	
\begin{defin}
	Given a weighted forecast $\Pi = (p, w)$ and $c > 0$, define $c\Pi := (p, cw)$.
\end{defin}
	
Note that $m\Pi = \bigoplus_{i = 1}^m \Pi$ for any positive integer $m$, by idempotence; this definition is a natural extension to all $c > 0$. We note the following (quite obvious) fact.
	
\begin{prop}
	For every weighted forecast $\Pi$ and $c_1, c_2 > 0$, we have $c_1(c_2 \Pi) = (c_1 c_2)\Pi$.
\end{prop}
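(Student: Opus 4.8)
The plan is to unfold the definition of scalar multiplication of weighted forecasts twice on each side of the claimed identity and then appeal to the associativity of multiplication of real numbers. Writing $\Pi = (p, w)$ with $p \in \mathcal{D}$ and $w > 0$, I would first evaluate the left-hand side: by the definition $c\Pi := (p, cw)$, we have $c_2 \Pi = (p, c_2 w)$, and applying the same definition once more gives $c_1(c_2 \Pi) = (p, c_1(c_2 w))$. On the right-hand side, a single application of the definition yields $(c_1 c_2)\Pi = (p, (c_1 c_2) w)$.

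It then remains to compare the two resulting weighted forecasts. Both have first coordinate $p$, since scalar multiplication never alters the probability component, so the two pairs agree if and only if their weight components agree. But $c_1(c_2 w) = (c_1 c_2) w$ is exactly the associativity of multiplication in $\RR$ (and the product of positive reals is positive, so both weights are legitimate positive weights), which completes the argument.

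There is essentially no obstacle here: the statement is, as the surrounding text notes, ``quite obvious,'' and its entire content is the observation that scalar multiplication of weighted forecasts acts only on the weight coordinate, where it reduces to ordinary real multiplication. The only points worth stating explicitly are that the probability coordinate $p$ is preserved on both sides (so no hypothesis on $p$ or on its membership in $\mathcal{D}$ is required) and that positivity of the weight is maintained throughout, so that every intermediate expression is a well-defined weighted forecast.
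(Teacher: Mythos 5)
Your proof is correct and is exactly the intended argument: the paper states this as a ``quite obvious'' fact with no written proof, and unfolding the definition $c\Pi := (p, cw)$ twice and invoking associativity of multiplication in $\RR$ is all there is to it.
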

	
A natural property that is not listed in Definition~\ref{def:wop_n2} is \emph{scale invariance}, i.e. that $\prb((p_1, w_1) \oplus (p_2, w_2)) = \prb((p_1, cw_1) \oplus (p_2, cw_2))$ for any positive $c$; or, equivalently, that $c(\Pi_1 \oplus \Pi_2) = c\Pi_1 \oplus c\Pi_2$. This in fact follows from the listed axioms.
	
\begin{prop}[Distributive property/scale invariance] \label{prop:dist}
	For every $\Pi_1, \Pi_2$ and any operator $\oplus$ satisfying the axioms in Definition~\ref{def:wop_n2}, we have $c(\Pi_1 \oplus \Pi_2) = c\Pi_1 \oplus c\Pi_2$.
\end{prop}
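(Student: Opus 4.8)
The plan is to establish scale invariance in stages --- first for integer scalars, then rational ones, then arbitrary reals via continuity --- after observing that the weight coordinate is automatic and all the content lives in the probability coordinate. By weight additivity, $\wt(c(\Pi_1 \oplus \Pi_2)) = c(\wt(\Pi_1) + \wt(\Pi_2)) = \wt(c\Pi_1) + \wt(c\Pi_2) = \wt(c\Pi_1 \oplus c\Pi_2)$, so the two sides already agree in weight and it suffices to prove they agree in probability. Since $\prb(c\Pi) = \prb(\Pi)$, the whole claim reduces to showing $\prb(c\Pi_1 \oplus c\Pi_2) = \prb(\Pi_1 \oplus \Pi_2)$ for every $c > 0$.

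The crux is the integer case $n(\Pi_1 \oplus \Pi_2) = n\Pi_1 \oplus n\Pi_2$, which I would prove by evaluating a single large pool in two different ways. Using $n\Pi = \bigoplus_{k=1}^n \Pi$ (idempotence) together with commutativity and associativity, the pool of the multiset consisting of $n$ copies of $\Pi_1$ and $n$ copies of $\Pi_2$ is independent of order and parenthesization. Pairing the copies as $n$ pairs $(\Pi_1, \Pi_2)$, each pooling to $\Pi_1 \oplus \Pi_2$, yields $n(\Pi_1 \oplus \Pi_2)$; instead grouping the block of $n$ copies of $\Pi_1$ and the block of $n$ copies of $\Pi_2$ yields $n\Pi_1 \oplus n\Pi_2$. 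Equating the two evaluations gives the identity. This is the main obstacle and essentially the only substantive step: it is precisely where commutativity, associativity, and idempotence are jointly used.

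To pass to rational $c = a/b$, I would apply the integer identity twice. Replacing each $\Pi_i$ by $c\Pi_i$ and pooling $b$ copies gives $b(c\Pi_1 \oplus c\Pi_2) = (bc)\Pi_1 \oplus (bc)\Pi_2 = a\Pi_1 \oplus a\Pi_2 = a(\Pi_1 \oplus \Pi_2)$, using $bc = a$ and then the integer identity in reverse (with $n = a$). The two ends differ only by a positive rescaling of the weight, hence share the same probability, so $\prb(c\Pi_1 \oplus c\Pi_2) = \prb(\Pi_1 \oplus \Pi_2)$.

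Finally, for arbitrary real $c > 0$, the map $c \mapsto \prb(c\Pi_1 \oplus c\Pi_2) = \prb((p_1, cw_1) \oplus (p_2, cw_2))$ is continuous, being the composition of the continuous map $c \mapsto (cw_1, cw_2)$ with the function guaranteed continuous by the continuity axiom; since it is constant on the dense set of positive rationals, it is constant on all of $\RR_{>0}$. Combining the probability and weight coordinates then yields $c(\Pi_1 \oplus \Pi_2) = c\Pi_1 \oplus c\Pi_2$ for every $c > 0$, as desired.
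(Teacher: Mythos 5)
Your proposal is correct and follows essentially the same route as the paper's proof: establish the identity for integer scalars by regrouping a large pool via commutativity, associativity, and idempotence; extend to rationals by two applications of the integer case; and pass to all positive reals using the continuity axiom and density of the rationals. The only cosmetic difference is in the rational step (you scale $c\Pi_i$ up by the denominator, while the paper first scales $\Pi_i$ down by it), which amounts to the same pair of applications of the integer identity.
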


\begin{proof}
	First suppose $c$ is an integer. Then
	\[c \Pi_1 \oplus c \Pi_2 = \bigoplus_{i = 1}^c \Pi_1 \oplus \bigoplus_{i = 1}^c \Pi_2 = \bigoplus_{i = 1}^c (\Pi_1 \oplus \Pi_2) = c(\Pi_1 \oplus \Pi_2).\]
	Here, the first and last steps follow by weight additivity and idempotence. Now suppose that $c = \frac{k}{\ell}$ is a rational number. Let $\Pi_1' = \frac{1}{\ell} \Pi_1$ and $\Pi_2' = \frac{1}{\ell} \Pi_2$. We have
	\[\frac{k}{\ell}(\Pi_1 \oplus \Pi_2) = \frac{k}{\ell} (\ell \Pi_1' \oplus \ell \Pi_2') = \frac{k}{\ell} \cdot \ell(\Pi_1' \oplus \Pi_2') = k(\Pi_1' \oplus \Pi_2') = k \Pi_1' \oplus k \Pi_2' = \frac{k}{\ell} \Pi_1 \oplus \frac{k}{\ell} \Pi_2.\]
	Here, the second and second-to-last steps follow from the fact that the distributive property holds for integers.
		
	Finally, make use of the continuity axiom to extend our proof to all positive real numbers $c$. In particular, it suffices to show that $\prb(\Pi_1 \oplus \Pi_2) = \prb(c\Pi_1 \oplus c\Pi_2)$. Let $p$ be the former quantity; note that $\prb(r\Pi_1 \oplus r\Pi_2) = p$ for positive rational numbers $r$. Since the rationals are dense among the reals, it follows that for every $\epsilon > 0$, we have $\abs{\prb(c\Pi_1 \oplus c\Pi_2) - p} \le \epsilon$. Therefore, $\prb(c\Pi_1 \oplus c\Pi_2) = p$. This completes the proof.
\end{proof}

Armed with these facts, we present a proof of Theorem~\ref{thm:representation_n2}.

\begin{proof}[Proof of Theorem~\ref{thm:representation_n2}]
	We first prove that any QA pooling operator $\oplus_g$ satisfies the axioms in Definition~\ref{def:wop_n2}. Weight additivity, commutativity, and idempotence are trivial. Associativity is also clear: given $\Pi_1 = (p_1, w_1)$ and likewise $\Pi_2, \Pi_3$, we have
	\begin{align*}
		g(\prb((\Pi_1 \oplus_g \Pi_2) \oplus_g \Pi_3)) &= \frac{(w_1 + w_2) g(\prb(\Pi_1 \oplus_g \Pi_2)) + w_3 g(p_3)}{(w_1 + w_2) + w_3}\\
		&= \frac{(w_1 + w_2) \frac{w_1 g(p_1) + w_2 (p_2)}{w_1 + w_2} + w_3 g(p_3)}{w_1 + w_2 + w_3} = \frac{w_1 g(p_1) + w_2 g(p_2) + w_3 g(p_3)}{w_1 + w_2 + w_3}
	\end{align*}
	and likewise for $g(\prb(\Pi_1 \oplus_g (\Pi_2 \oplus_g \Pi_3)))$, so $\prb((\Pi_1 \oplus_g \Pi_2) \oplus_g \Pi_3) = \prb(\Pi_1 \oplus_g (\Pi_2 \oplus_g \Pi_3))$ (since $g$ is strictly increasing and therefore injective). The fact that the weights are also the same is trivial. Continuity follows from the fact that
	\[\prb(\Pi_1 \oplus_g \Pi_2) = g^{-1} \parens{\frac{w_1 g(p_1) + w_2 g(p_2)}{w_1 + w_2}}\]
	is continuous in $(w_1, w_2)$ (when $w_1, w_2$ are not both zero). Here we are using the fact that $g$ is \emph{strictly} increasing, which means that $g^{-1}$ is continuous.
	
	Finally, regarding the monotonicity axiom, for any fixed $w$ and $p_1 > p_2$ (as in the axiom statement), we have
	\[g(\prb((p_1, x) \oplus_g (p_2, w - x))) = \frac{x g(p_1) + (w - x) g(p_2)}{x + w - x} = \frac{x g(p_1) + (w - x) g(p_2)}{w}.\]
	Since $p_1 > p_2$, we have $g(p_1) > g(p_2)$, so the right-hand side strictly increases with $x$. Since $g^{-1}$ is also strictly increasing, it follows that $\prb((p_1, x) \oplus_g (p_2, w - x))$ strictly increases with $x$.\\
	
	The converse --- that every pooling operator satisfying the axioms in Definition~\ref{def:wop_n2} is $\oplus_g$ for some $g$ --- works by constructing $g$ by fixing it at two points and constructing $g$ at all other points. Right now we show how to do this when the forecast domain is $[0, 1]$; see the proof of Theorem~\ref{thm:representation} for the argument in full generality.
	
	Let $\oplus$ be a pooling operator that satisfies our axioms. Define $g$ as follows: let $g(0) = 0$ and $g(1) = 1$. For $0 < p < 1$, define $g(p) = w$ where $(1, w) \oplus (0, 1 - w) = (p, 1)$. (This $w$ exists by continuity and the intermediate value theorem; it is unique by the ``strictly" increasing stipulation of monotonicity.) Note that $g$ is continuous and increasing by monotonicity.\footnote{As a matter of fact, $g$ is strictly increasing because it is impossible for $g(p_1)$ to equal $g(p_2)$ for $p_1 \neq p_2$, as that would mean that $(1, g(p_1)) \oplus (0, 1 - g(p_1)) = (p_1, 1) = (p_2, 1)$. Another way to look at this is that it comes from the fact that $(1, w) \oplus (0, 1 - w)$ is continuous in $w$ by the continuity axiom. In a sense, the \emph{continuity} of $g$ corresponds to the \emph{strictness} of increase in the monotonicity axiom and the \emph{strictness} of increase of $g$ corresponds to the \emph{continuity} axiom.}
	
	We wish to show that for any $\Pi_1 = (p_1, w_1)$ and $\Pi_2 = (p_2, w_2)$, we have that $\Pi_1 \oplus \Pi_2 = \Pi_1 \oplus_g \Pi_2$. Clearly the weight of both sides is $w_1 + w_2$, so we wish to show that the probabilities on each side are the same. We have\footnote{Steps 3 and 7 uses the distributive property (Proposition~\ref{prop:dist}).}
	\begin{align*}
		\prb(\Pi_1 \oplus \Pi_2) &= \prb(w_1 (p_1, 1) \oplus w_2 (p_2, 1))\\
		&= \prb(w_1 ((1, g(p_1)) \oplus (0, 1 - g(p_1))) \oplus w_2 ((1, g(p_2)) \oplus (0, 1 - g(p_2))))\\
		&= \prb(w_1 (1, g(p_1)) \oplus w_1 (0, 1 - g(p_1)) \oplus w_2 (1, g(p_2)) \oplus w_2 (0, 1 - g(p_2)))\\
		&= \prb((1, w_1 g(p_1)) \oplus (0, w_1(1 - g(p_1))) \oplus (1, w_2 g(p_2)) \oplus (0, w_2(1 - g(p_2))))\\
		&= \prb((1, w_1 g(p_1) + w_2 g(p_2)) \oplus (0, w_1(1 - g(p_1)) + w_2(1 - g(p_2))))\\
		&= \prb \parens{\frac{1}{w_1 + w_2}((1, w_1 g(p_1) + w_2 g(p_2)) \oplus (0, w_1(1 - g(p_1)) + w_2(1 - g(p_2))))}\\
		&= \prb \parens{\parens{1, \frac{w_1 g(p_1) + w_2 g(p_2)}{w_1 + w_2}} \oplus \parens{0, \frac{w_1(1 - g(p_1)) + w_2(1 - g(p_2))}{w_1 + w_2}}},
	\end{align*}
	which by definition of $g$ is equal to the probability $p$ such that $g(p) = \frac{g(p_1) w_1 + g(p_2) w_2}{w_1 + w_2}$. That is, $\prb(\Pi_1 \oplus \Pi_2) = \prb(\Pi_1 \oplus_g \Pi_2)$.
	
	Showing that $\oplus$ and $\oplus_g$ are equivalent for more than two arguments is now trivial:
	\[\sideset{}{_\vect{g}}\bigoplus_{i = 1}^m \Pi_i = \Pi_1 \oplus_g \Pi_2 \oplus_g \Pi_3 \dots \oplus_g \Pi_m = \Pi_1 \oplus \Pi_2 \oplus_g \Pi_3 \dots \oplus_g \Pi_m = \dots = \bigoplus_{i = 1}^m \Pi_i.\]
	(Here we are implicitly using the fact that $\oplus_g$ is associative, as we proved earlier.) This completes the proof.
\end{proof}

\subsection{Generalizing to higher dimensions} \label{sec:axioms_general_n}
In Appendix~\ref{appx:extending}, we generalize our axioms from $n = 2$ outcomes to arbitrary values of $n$. An important challenge is extending the monotonicity axiom: in higher dimensions, what is the appropriate generalization of an increasing function? We show that the correct notion is \emph{cyclical monotonicity}, which we define and discuss. We then present our axiomatization (Definition~\ref{def:qa_ax}) and prove that the axioms constitute a characterization of the class of QA pooling operators (Theorem~\ref{thm:representation}). On a high level, the proof is not dissimilar to that of Theorem~\ref{thm:representation_n2}, though the details are fairly different and more technical.\\

In conclusion, in Definition~\ref{def:wop_n2} we made a list of natural properties that a pooling operator may satisfy. Theorem~\ref{thm:representation_n2} shows that the pooling operators satisfying these properties are exactly the QA pooling operators. In Appendix~\ref{appx:extending}, we generalize this theorem to higher dimensions, thus fully axiomatizing QA pooling. This result gives us an additional important reason to believe that QA pooling with respect to a proper scoring rule is a fundamental notion.
\section{Conclusions and future work} \label{sec:conclusion}

While in this work we have focused on scoring rules for probability distributions over outcomes, scoring rules can be used to elicit various other properties, such as the expectation or the median of a random variable \cite{sav71} \cite{gr07} \cite{lps08}. The topic of \emph{property elicitation} studies such scoring rules. The representation of a proper scoring rule in terms of its expected reward function (\cite[Theorem 2]{gr07}, our Proposition~\ref{prop:G_convex}) generalizes to arbitrary linear properties \cite[Theorem 11]{fk15}. That is, consider a random variable (or $n$-tuple of random variables) $X$ and a convex function $G$ whose domain is (a superset of) the possible values of $X$. If $\vect{p}$ is an expert's forecast for $\EE{X}$ and $\vect{x}$ is the realized outcome, then the scoring rule $s(\vect{p}; \vect{x}) := G(\vect{p}) + \angles{\vect{g}(\vect{p}), \vect{x} - \vect{p}}$ is proper.\footnote{In the setting of this paper, $X$ is the vector of random variables, where the $j$-th variable is $1$ if outcome $j$ happens and $0$ otherwise; and $\vect{x} = \delta_j$ where $j$ is the outcome that happens.} Modulo a fairly straightforward generalization, all proper scoring rules for $\EE{X}$ take this form.

Our definition of QA pooling extends verbatim to the setting of eliciting linear properties. In this more general setting, for any proper scoring rule that has convex exposure (see Definition~\ref{def:convex_exposure}), it remains the case $\vect{g}(\vect{p}^*) = \sum_i w_i \vect{g}(\vect{p}_i)$; as a consequence, Theorems~\ref{thm:max_min}, \ref{thm:ws_concave}, and \ref{thm:no_regret} generalize.\\

There are several other interesting directions for generalization and future work:

\begin{itemize}
\item As mentioned in Section~\ref{sec:related_work}, QA pooling can be interpreted in terms of cost function markets: if $\vect{q}_i$ denotes the quantity of shares that each expert would buy in order to bring a market into line with the experts' probabilities, then the QA pool is the probability corresponding to the quantity vector $\vect{q}^* := \sum_i w_i \vect{q}_i$. This follows from \cite[Eq. 22]{acv13}; the underlying reason is the convex duality between an expected reward function $G$ and the corresponding cost function $C$. Exploring this connection further may yield insights into forecast aggregation through market mechanisms.
\item While idempotence is a fairly natural axiom, Bayesian approaches to aggregation often violate this axiom: for example, an aggregator who receives two independent estimates with the same probability should act differently than an aggregator who only received one of the estimates. It would be interesting to explore whether our results and techniques are applicable to notions of pooling that do not satisfy idempotence, or to other (perhaps Bayesian) settings.
\item As we discussed in Section~\ref{sec:related_work}, there is a fair amount of work on aggregating forecasts with prediction markets --- often ones that are based on proper scoring rules. Is there a natural trading-based interpretation of QA pooling?
%\item Definition~\ref{def:generalized_qa_pool} gave a natural generalization of QA pooling to proper scoring rules that do not have convex exposure. There is another potentially natural generalization, which is to define the QA pool as the forecast $\vect{p}^*$ maximizing $\min_j u(\vect{p}^*; j)$ (as defined in Theorem~\ref{thm:max_min}). Is this generalization equivalent to Definition~\ref{def:generalized_qa_pool}? If not, how does it behave?
\item Although our proof of Theorem~\ref{thm:ws_concave} (that the QA pool of forecasts is concave in the experts' weights) relies on the convex exposure property, we have not ruled out the possibility that the result holds even without this assumption. Is this the case? Even if not, are no-regret algorithms for learning weights still possible?
\item Our no-regret algorithm for learning weights relies on $s$ being bounded, because this allows us to place a concrete upper bound on $\norm{\nabla L^t(\cdot)}_2$. If $s$ is unbounded, in a fully adversarial setting it is impossible to achieve a no-regret guarantee. On the other hand, \cite{nr22} show that if experts are assumed to be calibrated, then learning the optimal weights for logarithmic pooling is possible when using the log scoring rule. Does this result generalize to learning optimal weights for QA pooling with respect to other unbounded proper scoring rules?
\item We have presented a list of axioms characterizing QA pooling operators. Is there an alternative axiomatization that uses equally natural but fewer axioms?
\item Do our results extend to probability distributions over infinitely many outcomes (e.g.\ for experts who wish to forecast a real number)?
\end{itemize}

\section*{Acknowledgments}
This work was partially supported by the NSF under CCF-1813188 and DGE-2036197 and by the ARO under W911NF1910294.

\printbibliography

\begin{appendix}
	\section{Details omitted from Section~\ref{sec:intro}} \label{appx:intro}
\paragraph{Logarithmic pooling as averaging Bayesian evidence} We discuss for simplicity the binary outcome case, though this discussion holds in general. Suppose that an expert assigns a probability to an event $X$ occurring by updating on some prior (50\%, say, though this does not matter). Suppose that the expert receives evidence $E$. Bayesian updating works as such:
\[\frac{\pr{X \mid E}}{\pr{\neg X \mid E}} = \frac{\pr{X}}{\pr{\neg X}} \cdot \frac{\pr{E \mid X}}{\pr{E \mid \neg X}}.\]
That is, the expert \emph{multiplies their odds of $X$} (i.e. the probability of $X$ divided by the probability of $\neg X$) by the relative likelihood that $E$ would be the case conditioned on $X$ versus $\neg X$. Equivalently, we can take the log of both sides; this tells us that the posterior log odds of $X$ is equal to the prior log odds plus $\log \frac{\pr{E \mid X}}{\pr{E \mid \neg X}}$. For every piece of evidence $E_k$ that the expert receives, they make this update (assuming that the $E_k$'s are mutually independent conditioned on $X$).

This means that for any $E_k$, we can view the quantity $\log \frac{\pr{E_k \mid X}}{\pr{E_k \mid \neg X}}$ as the strength of evidence that $E_k$ gives in favor of $X$. We call this the ``Bayesian evidence" in favor of $X$ given by $E_k$. The \emph{total Bayesian evidence} that the expert has in favor of $X$ (i.e. the sum of these values over all $E_k$) is the expert's log odds of $X$ (i.e. $\log \frac{\pr{X}}{\pr{\neg X}}$).

Logarithmic pooling takes the average of experts' log odds of $X$. As we have shown, if the experts are Bayesian, this amounts to taking the average of all experts' amounts of Bayesian evidence in favor of $X$.
	\section{Details omitted from Section~\ref{sec:convex_losses}} \label{appx:convex_losses}
\begin{proof}[Proof of Theorem~\ref{thm:ws_concave}]
	Let $\vect{v}$ and $\vect{w}$ be two weight vectors. We wish to show that for any $c \in [0, 1]$, we have
	\[\ws_j(c\vect{v} + (1 - c)\vect{w}) - c \ws_j(\vect{v}) - (1 - c) \ws_j(\vect{w}) \ge 0.\]
	Recall the notation $\vect{p}^*_{\vect{w}}$ from Definition~\ref{def:qa_pooling}. Note that
	\begin{equation} \label{eq:g_linear}
		\vect{g}(\vect{p}^*_{c\vect{v} + (1 - c)\vect{w}}) = \sum_{i = 1}^m (cv_i + (1 - c)w_i) \vect{g}(\vect{p}_i) = c \vect{g}(\vect{p}^*_{\vect{v}}) + (1 - c) \vect{g}(\vect{p}^*_{\vect{w}}).
	\end{equation}
	We have
	\begin{align*}
		&\quad \ws_j(c\vect{v} + (1 - c)\vect{w}) - c \ws_j(\vect{v}) - (1 - c) \ws_j(\vect{w})\\
		&= s(\vect{p}^*_{c\vect{v} + (1 - c)\vect{w}}; j) - c s(\vect{p}^*_{\vect{v}}; j) - (1 - c) s(\vect{p}^*_{\vect{w}}; j)\\
		&= G(\vect{p}^*_{c\vect{v} + (1 - c)\vect{w}}) + \angles{\vect{g}(\vect{p}^*_{c\vect{v} + (1 - c)\vect{w}}), \delta_j - \vect{p}^*_{c\vect{v} + (1 - c)\vect{w}}}\\
		&\qquad - c(G(\vect{p}^*_{\vect{v}}) + \angles{\vect{g}(\vect{p}^*_{\vect{v}}), \delta_j - \vect{p}^*_{\vect{v}}}) - (1 - c)(G(\vect{p}^*_{\vect{w}}) + \angles{\vect{g}(\vect{p}^*_{\vect{w}}), \delta_j - \vect{p}^*_{\vect{w}}})\\
		&= G(\vect{p}^*_{c\vect{v} + (1 - c)\vect{w}}) - \angles{\vect{g}(\vect{p}^*_{c\vect{v} + (1 - c)\vect{w}}), \vect{p}^*_{c\vect{v} + (1 - c)\vect{w}}}\\
		&\qquad - c(G(\vect{p}^*_{\vect{v}}) - \angles{\vect{g}(\vect{p}^*_{\vect{v}}), \vect{p}^*_{\vect{v}}}) - (1 - c)(G(\vect{p}^*_{\vect{w}}) - \angles{\vect{g}(\vect{p}^*_{\vect{w}}), \vect{p}^*_{\vect{w}}})
	\end{align*}
	Step 1 follows from the definition of $\ws$. Step 2 follows from Equation~\ref{eq:s_from_g_2}. Step 3 follows from Equation~\ref{eq:g_linear}, and specifically that the inner product of each side with $\delta_j$ is the same (so the $\delta_j$ terms cancel out, leaving a quantity that does not depend on $j$). Continuing where we left off:
	
	\begin{align*}
		&\quad \ws_j(c\vect{v} + (1 - c)\vect{w}) - c \ws_j(\vect{v}) - (1 - c) \ws_j(\vect{w})\\
		&= G(\vect{p}^*_{c\vect{v} + (1 - c)\vect{w}}) - c \angles{\vect{g}(\vect{p}^*_{\vect{v}}), \vect{p}^*_{c\vect{v} + (1 - c)\vect{w}}} - (1 - c) \angles{\vect{g}(\vect{p}^*_{\vect{w}}), \vect{p}^*_{c\vect{v} + (1 - c)\vect{w}}}\\
		&\qquad - c(G(\vect{p}^*_{\vect{v}}) - \angles{\vect{g}(\vect{p}^*_{\vect{v}}), \vect{p}^*_{\vect{v}}}) - (1 - c)(G(\vect{p}^*_{\vect{w}}) - \angles{\vect{g}(\vect{p}^*_{\vect{w}}), \vect{p}^*_{\vect{w}}})\\
		&= c(G(\vect{p}^*_{c\vect{v} + (1 - c)\vect{w}}) - G(\vect{p}^*_{\vect{v}}) - \angles{\vect{g}(\vect{p}^*_{\vect{v}}), \vect{p}^*_{c\vect{v} + (1 - c)\vect{w}} - \vect{p}^*_{\vect{v}}})\\
		&\qquad + (1 - c)(G(\vect{p}^*_{c\vect{v} + (1 - c)\vect{w}}) - G(\vect{p}^*_{\vect{w}}) - \angles{\vect{g}(\vect{p}^*_{\vect{w}}), \vect{p}^*_{c\vect{v} + (1 - c)\vect{w}} - \vect{p}^*_{\vect{w}}})\\
		&= c D_G(\vect{p}^*_{c\vect{v} + (1 - c)\vect{w}} \parallel \vect{p}^*_{\vect{v}}) + (1 - c) D_G(\vect{p}^*_{c\vect{v} + (1 - c)\vect{w}} \parallel \vect{p}^*_{\vect{w}}) \ge 0.
	\end{align*}
	Step 4 again follows from Equation~\ref{eq:g_linear}. Step 5 is a rearrangement of terms. Finally, step 6 follows from the definition of Bregman divergence, and step 7 follows from the fact that Bregman divergence is always non-negative. This completes the proof.
\end{proof}

\begin{remark}
Theorem~\ref{thm:ws_concave} can be stated in more generality: $s$ need not have convex exposure; it suffices to have that for the particular $\vect{p}_1, \dots, \vect{p}_m$, the QA pool of these forecasts exists for every weight vector.
\end{remark}

Beyond Theorem~\ref{thm:ws_concave}'s instrumental use for no-regret online learning of expert weights (Theorem~\ref{thm:no_regret} below), the result is interesting in its own right. For example, the following fact --- loosely speaking, that QA pooling cannot benefit from weight randomization --- follows as a corollary. (Recall the definition of $\vect{p}_{\vect{w}}^*$ from Definition~\ref{def:qa_pooling}.)

\begin{corollary}
Consider a randomized algorithm $A$ with the following specifications:
\begin{itemize}
	\item Input: a proper scoring rule $s$ with convex exposure, expert forecasts $\vect{p}_1, \dots, \vect{p}_m$.
	\item Output: a weight vector $\vect{w} \in \Delta^m$.
\end{itemize}
For any input $s, \vect{p}_1, \dots, \vect{p}_m$ and for every $j$, we have
\[s(\vect{p}^*_{\EE[A]{\vect{w}}}; j) \ge \EE[A]{s(\vect{p}^*_{\vect{w}}; j)},\]
where $\vect{p}^*_{\vect{x}}$ denotes the QA pool of $\vect{p}_1, \dots, \vect{p}_m$ with weight vector $\vect{x}$.
\end{corollary}

\begin{myalgorithm}[Online gradient descent algorithm for Theorem~\ref{thm:no_regret}] \label{alg:ogd}
	We proceed as follows:
	\begin{itemize}
		\item For $t \ge 1$, define $\eta_t := \frac{1}{M\sqrt{mt}}$.
		\item Start with an arbitrary guess $\vect{w}^1 \in \Delta^m$.
		\item At each time step $t$ from $1$ to $T$:
		\begin{itemize}
			\item Play $\vect{w}^t$ and observe loss $L^t(\vect{w}^t)$.
			\item Let $\tilde{\vect{w}}^{t + 1} = \vect{w}^t - \eta_t \nabla L^t(\vect{w}^t)$. If $\tilde{\vect{w}}^{t + 1} \in \Delta^m$, let $\vect{w}^{t + 1} = \tilde{\vect{w}}^{t + 1}$. Otherwise, let $\vect{w}^{t + 1}$ be the orthogonal projection of $\tilde{\vect{w}}^{t + 1}$ onto $\Delta^m$.
		\end{itemize}
	\end{itemize}
\end{myalgorithm}

The above algorithm is an adaptation of the online gradient descent algorithm (as presented in \cite[\S3.1]{hazan19}) for the setting of Theorem~\ref{thm:no_regret}. In our context, the loss function is $L^t(\vect{w}) = -\ws_{j^t}(\vect{w})$, where $\ws$ is as in Theorem~\ref{thm:ws_concave}, relative to forecasts $\vect{p}_1^t, \dots, \vect{p}_m^t$.

Below, we prove that Theorem~\ref{thm:no_regret}, i.e.\ that Algorithm~\ref{alg:ogd} satisfies the details of Theorem~\ref{thm:no_regret}. The proof amounts to applying the standard bounds for online gradient descent, though with an extra step: we use the bound $M$ on $\norm{\vect{g}}$ to bound the gradient of the loss as a function of expert weights.

\begin{proof}[Proof of Theorem~\ref{thm:no_regret}]
	We apply \cite[Theorem 3.1]{hazan19}; this theorem tells us that in order to prove the stated bound, it suffices to show that for all $t$ and $\vect{w}$, $\norm{\nabla L^t(\vect{w})}_2 \le \sqrt{2m}M$.
	
	Let $L$ be an arbitrary loss function, i.e. $L(\vect{w}) = -\ws_j(\vect{w})$ for some $j, \vect{p}_1, \dots, \vect{p}_m$. Let $\vect{p}^*(\vect{w}) = \sideset{}{_\vect{g}}\bigoplus\limits_{i = 1}^m (\vect{p}_i, w_i)$. We claim that
	\begin{equation} \label{eq:delLw}
		\nabla L(\vect{w}) = \begin{pmatrix} \vect{g}(\vect{p}_1) \\ \vdots \\ \vect{g}(\vect{p}_m) \end{pmatrix} (\vect{p}^*(\vect{w}) - \delta_j),
	\end{equation}
	where this $m$-dimensional vector should be interpreted modulo translation by $\vect{1}_m$ (see Remark~\ref{remark:mod_T1n}). To see this, observe that
	\[\nabla L(\vect{w}) = - \nabla WS_j(\vect{w}) = - \nabla_{\vect{w}} s(\vect{p}^*(\vect{w}); j) = - \nabla_{\vect{w}}(G(\vect{p}^*(\vect{w})) + \angles{\vect{g}(\vect{p}^*(\vect{w})), \delta_j - \vect{p}^*(\vect{w})}),\]
	where $\nabla_{\vect{w}}$ denotes the gradient with respect to change in the weight vector $\vect{w}$ (as opposed to change in the probability vector). Now, by the chain rule for gradients, we have
	\[\nabla_{\vect{w}} G(\vect{p}^*(\vect{w})) = (J_{\vect{p}^*}(\vect{w}))^{\top} \vect{g}(\vect{p}^*(\vect{w})),\]
	where $J_{\vect{p}^*}$ denotes the Jacobian matrix of the function $\vect{p}^*(\vect{w})$. Also, we have
	\[\vect{g}(\vect{p}^*(\vect{w})) = \sum_{i = 1}^m w_i \vect{g}(\vect{p}_i),\]
	so (again by the chain rule) we have
	\[\nabla_{\vect{w}}(\angles{\vect{g}(\vect{p}^*(\vect{w})), \delta_j - \vect{p}^*(\vect{w})}) = \begin{pmatrix} \vect{g}(\vect{p}_1) \\ \vdots \\ \vect{g}(\vect{p}_m) \end{pmatrix} (\delta_j - \vect{p}^*(\vect{w})) - (J_{\vect{p}^*}(\vect{w}))^{\top} \vect{g}(\vect{p}^*(\vect{w})).\]
	This gives us Equation~\ref{eq:delLw}.\footnote{Note that the cancellation of the Jacobian terms stems not from the specific relationship between $\vect{p}^*$ and $\vect{w}$ but from the nature of proper scoring rules. We obtain the same cancellation if we consider $\nabla s(\vect{p}; j)$, where after differentiating $G(\vect{p}) + \angles{\vect{g}(\vect{p}), \delta_j - \vect{p}}$ we find that the $\vect{g}(\vect{p})$ terms cancel.}\\
	
	Now, for any $i$, we have
	\[\abs{\angles{\vect{g}(\vect{p}_i), \vect{p}^*(\vect{w}) - \delta_j}} \le \norm{\vect{g}(\vect{p}_i)}_2 \norm{\vect{p}^*(\vect{w}) - \delta_j}_2 \le \sqrt{2}M.\]
	Therefore,
	\[\norm{\nabla L(\vect{w})}_2 \le \sqrt{m \cdot (\sqrt{2}M)^2} = \sqrt{2m}M,\]
	completing the proof.
\end{proof}
	\section{Details omitted from Section~\ref{sec:axiomatization}} \label{appx:ax}
	We claim that our axioms in Definition~\ref{def:qa_ax_n2} can be restated equivalently in a form similar to that of Kolmogorov introduced at the top of Section~\ref{sec:axiomatization} (though with weights.)
	\begin{claim}
		Given a pooling operator $\oplus$ on $\mathcal{D}$ satisfying Definition~\ref{def:qa_ax_n2}, the function $M$ defined on arbitrary tuples of weighted forecasts defined by $M(\Pi_1, \dots, \Pi_m) := \bigoplus_{i = 1}^m \Pi_i$ satisfies the following axioms:
		\begin{enumerate}[label=(\arabic*)]
			\item $M(\Pi_1, \dots, \Pi_m)$ is strictly increasing in each $\prb(\Pi_i)$ and continuous in its inputs.\footnote{That is, it is a continuous function of its input in $\mathcal{D}^m \times (\RR_{\ge 0}^m \setminus \vect{0})$, where weighted forecasts with weight $0$ are ignored when computing $M$.}
			\item $M$ is symmetric in its arguments.
			\item $M((p, w_1), \dots, (p, w_m)) = (p, \sum_i w_i)$.
			\item $M(\Pi_1, \dots, \Pi_k, \Pi_{k + 1}, \dots, \Pi_m) = M(\Pi', \Pi_{k + 1}, \dots, \Pi_m)$, where $y := M(\Pi_1, \dots, \Pi_k)$.
			\item $M((p_1, w_1), \dots, (p_m, w_m))$ has weight $w_1 + \dots + w_m$.
		\end{enumerate}
		Additionally, given any $M$ defined on arbitrary tuples of weighted forecasts, the operator $\oplus$ defined by $\Pi_1 \oplus \Pi_2 := M(\Pi_1, \Pi_2)$ satisfies Definition~\ref{def:qa_ax_n2}.
	\end{claim}

	\begin{proof}
		We first prove that given $\oplus$ satisfying Definition~\ref{def:qa_ax_n2}, $M$ satisfies the stated axioms. The last four axioms are clear, so we prove the first one. The fact that $M$ is strictly increasing in each probability follows immediately by considering the continuous, strictly increasing function $g$ such that $\oplus = \oplus_g$, which exists by Theorem~\ref{thm:representation_n2}. Continuity likewise follows, since the quantity in Definition~\ref{def:qa_ax_n2} is continuous.
		
		We now prove that given $M$ satisfying the stated axioms, $\oplus$ satisfies Definition~\ref{def:qa_ax_n2}. Weight additivity, commutativity, continuity, and idempotence are clear. To prove associativity, note that
		\[\Pi_1 \oplus (\Pi_2 \oplus \Pi_3) = M(\Pi_1, M(\Pi_2, \Pi_3)) = M(\Pi_1, \Pi_2, \Pi_3) = M(M(\Pi_1, \Pi_2), \Pi_3) = (\Pi_1 \oplus \Pi_2) \oplus \Pi_3.\]
		To prove monotonicity, let $p_1 > p_2$ and $w > x > y$. We wish to prove that $\prb((p_1, x) \oplus (p_2, w - x)) > \prb((p_1, y) \oplus (p_2, w - y))$. We have
		\begin{align*}
			\prb((p_1, x) \oplus (p_2, w - x)) &= \prb(M((p_1, x), (p_2, w - x))) = \prb(M((p_1, y), (p_1, x - y), (p_2, w - x)))\\
			&> \prb(M((p_1, y), (p_2, x - y), (p_2, w - x))) = \prb(M((p_1, y), (p_2, w - y)))\\
			&= \prb((p_1, y) \oplus (p_2, w - y)).
		\end{align*}
	\end{proof}

\subsection{Extending the results of Section~\ref{sec:axiomatization} to $n > 2$ outcomes} \label{appx:extending}
We now discuss extending our axiomatization to arbitrary values of $n$ in a way that, again, describes the class of QA pooling operators. Just as we fixed a two-outcome forecast domain $\mathcal{D}$ in Section~\ref{sec:axiomatization}, we now fix an $n$-outcome forecast domain $\mathcal{D}$ for any $n \ge 2$. Our definition of weighted forecasts remains the same (except that now $\prb(\Pi)$ is a vector). Our definition of quasi-arithmetic pooling, however, needs to change to make $\vect{g}$ vector-valued. This raises the question: what is the analogue of ``increasing" for vector-valued functions? It turns out that the relevant notion for us is \emph{cyclical monotonicity}, introduced by Rockafellar \cite{roc70_paper} (see also \cite[\S27]{roc70_book}). We will define this notion shortly, but first we give the definition of quasi-arithmetic pooling with arbitrary weights (analogous to Definition~\ref{def:qa_ax_n2}) for this setting. Throughout this section, we will use the notation $H_n(c) := \{\vect{x} \in \RR^n: \sum_i x_i = c\}$. Recall from Remark~\ref{remark:mod_T1n} that the range of the gradient of a function defined on $\mathcal{D}$ is a subset of $H_n(0)$.

\begin{defin}[Quasi-arithmetic pooling with arbitrary weights] \label{def:qa_ax}
	Given a continuous, strictly cyclically monotone vector-valued function $\vect{g}: \mathcal{D} \to H_n(0)$ whose range is a convex set, and weighted forecasts $\Pi_1 = (\vect{p}_1, w_1), \dots, \Pi_m = (\vect{p}_m, w_m)$, define the \emph{quasi-arithmetic pool} of $\Pi_1, \dots, \Pi_m$ with respect to $\vect{g}$ as
	\[\sideset{}{_{\vect{g}}}\bigoplus_{i = 1}^m (\vect{p}_i, w_i) := \parens{\vect{g}^{-1} \parens{\frac{\sum_i w_i \vect{g}(\vect{p}_i)}{\sum_i w_i}}, \sum_i w_i}.\]
\end{defin}

Note that QA pooling as defined in Definition~\ref{def:qa_pooling} can be written in the form of Definition~\ref{def:qa_ax} if and only if the scoring rule has convex exposure; if it does not, then for some choices of parameters, $\sum_i w_i \vect{g}(\vect{p}_i)$ will be equal to a subgradient -- but not the gradient -- of $G$ at some point.

\begin{defin}[Cyclical monotonicity] \label{def:cyc_mon}
	A function $\vect{g}: U \subseteq \RR^n \to \RR^n$ is \emph{cyclically monotone} if for every list of points $\vect{x}_0, \vect{x}_1, \dots, \vect{x}_{k - 1}, \vect{x}_k = \vect{x}_0 \in U$, we have
	\[\sum_{i = 1}^k \angles{\vect{g}(\vect{x}_i), \vect{x}_i - \vect{x}_{i - 1}} \ge 0.\]
	We also say that $\vect{g}$ is \emph{strictly cyclically monotone} if the inequality is strict except when $\vect{x}_0 = \dots = \vect{x}_{k - 1}$.
\end{defin}

To gain an intuition for this notion, consider the case of $k = 2$; then this condition says that $\angles{\vect{g}(\vect{x}_1) - \vect{g}(\vect{x}_0), \vect{x}_1 - \vect{x}_0} \ge 0$. In other words, the change in $\vect{g}$ from $\vect{x}_0$ to $\vect{x}_1$ is in the same general direction as the direction from $\vect{x}_0$ to $\vect{x}_1$. This property is called \emph{2-cycle} (or \emph{weak}) \emph{monotonicity}.

Cyclical monotonicity is a stronger notion, which may be familiar to the reader for its applications in mechanism design and revealed preference theory, see e.g. \cite{ls07}, \cite{abhm10}, \cite{fk14}, \cite[\S2.1]{vohra07}. In such settings, it is usually the case that two-cycle and cyclical monotonicity are equivalent. Indeed, Saks and Yu showed that these conditions are equivalent in settings where the set of outcomes (i.e. the range of $\vect{g}$) is finite \cite{sy05}. However, cyclical monotonicity is substantially stronger than two-cycle monotonicity when the range of $\vect{g}$ is infinitely large, as in our setting. In fact, the difference between these two conditions is that a two-cycle monotone function is cyclically monotone if and only if it is also \emph{vortex-free} \cite[Theorem 3.9]{ak14}. \emph{Vortex-freeness} means that the path integral of $\vect{g}$ along any triangle vanishes. See \cite{ak14} for a deteailed comparison of these two notions.\\

The immediately relevant fact for us is that cyclically monotone functions are gradients of convex functions (and vice versa). Speaking more precisely:

\begin{theorem} \label{thm:cyc_mon_convex}
	A vector-valued function $\vect{g}$ is continuous and strictly cyclically monotone if and only if it is the gradient of a differentiable, strictly convex function $G$.
\end{theorem}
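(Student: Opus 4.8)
The plan is to prove the two implications separately, with the reverse direction resting on Rockafellar's classical construction of a convex potential for a cyclically monotone map. For the forward direction (if $\vect{g} = \nabla G$ for a differentiable, strictly convex $G$, then $\vect{g}$ is continuous and strictly cyclically monotone), note first that a differentiable convex function is automatically continuously differentiable (the implication \ref{item:G_diff}$\Rightarrow$\ref{item:G_cont_diff} of Proposition~\ref{prop:s_cont_g_diff}, i.e.\ \cite[Theorem 25.5]{roc70_book}), so $\vect{g}$ is continuous. For strict cyclical monotonicity I would invoke the first-order characterization of strict convexity: for distinct $\vect{a}, \vect{b}$ we have $G(\vect{b}) > G(\vect{a}) + \angles{\vect{g}(\vect{a}), \vect{b} - \vect{a}}$. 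Taking $\vect{a} = \vect{x}_i$, $\vect{b} = \vect{x}_{i-1}$ and rearranging gives $\angles{\vect{g}(\vect{x}_i), \vect{x}_i - \vect{x}_{i-1}} \ge G(\vect{x}_i) - G(\vect{x}_{i-1})$, strict whenever $\vect{x}_i \neq \vect{x}_{i-1}$. Summing around a cycle $\vect{x}_0, \dots, \vect{x}_k = \vect{x}_0$ makes the right-hand side telescope to zero, so $\sum_i \angles{\vect{g}(\vect{x}_i), \vect{x}_i - \vect{x}_{i-1}} \ge 0$ with strict inequality unless all consecutive points coincide, which is exactly Definition~\ref{def:cyc_mon}.

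For the reverse direction (continuous and strictly cyclically monotone $\Rightarrow$ gradient of a differentiable, strictly convex $G$), I would follow Rockafellar. Fix a base point $\vect{x}_0$ in the relative interior of $\mathcal{D}$ and define
\[ G(\vect{x}) := \sup \left\{ \sum_{i=1}^{k} \angles{\vect{g}(\vect{y}_{i-1}), \vect{y}_i - \vect{y}_{i-1}} : \vect{y}_0 = \vect{x}_0,\ \vect{y}_k = \vect{x},\ \vect{y}_1, \dots, \vect{y}_{k-1} \in \mathcal{D} \right\}. \]
Only the final term of each chain depends on $\vect{x}$, and affinely, so $G$ is a supremum of affine functions, hence convex and lower semicontinuous; (non-strict) cyclical monotonicity guarantees $G(\vect{x}_0) = 0$ and that $G$ is finite on $\mathcal{D}$. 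The crux is that $\vect{g}(\vect{x}) \in \partial G(\vect{x})$ for every $\vect{x}$: given any near-optimal chain to $\vect{x}$ and any target $\vect{z}$, appending the step $\vect{x} \to \vect{z}$ scored with $\vect{g}(\vect{x})$ produces a competitor chain to $\vect{z}$, yielding $G(\vect{z}) \ge G(\vect{x}) + \angles{\vect{g}(\vect{x}), \vect{z} - \vect{x}} - \epsilon$ for arbitrary $\epsilon > 0$, and letting $\epsilon \to 0$ gives the subgradient inequality. This is the content of \cite[Theorem 24.8]{roc70_book}, which I would either cite directly or reprove via the construction above.

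It then remains to upgrade ``$\vect{g}$ is a subgradient selection'' to ``$\vect{g}$ is the gradient of a strictly convex $G$.'' Differentiability follows because $G$ is convex and $\vect{g}$ is a continuous single-valued selection of $\partial G$; a convex function with a continuous subgradient is differentiable \cite[Proposition 17.41]{bc11} (the same fact used in proving Proposition~\ref{prop:s_cont_g_diff}), so $\partial G(\vect{x}) = \{\vect{g}(\vect{x})\}$ and $\nabla G = \vect{g}$. For strict convexity, suppose $G$ were affine on a segment $[\vect{x}, \vect{y}]$ with $\vect{x} \neq \vect{y}$; then $\angles{\vect{g}(\vect{y}) - \vect{g}(\vect{x}), \vect{y} - \vect{x}} = 0$, contradicting the $k = 2$ case of strict cyclical monotonicity (the cycle $\vect{x}, \vect{y}, \vect{x}$), which forces this inner product to be strictly positive. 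Throughout I would work intrinsically in the affine hull $H_n(1)$ of $\mathcal{D}$, with gradients valued in $H_n(0)$ per Remark~\ref{remark:mod_T1n}, so that ``differentiable'' is understood relative to the $(n-1)$-dimensional domain rather than ambient $\RR^n$.

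The main obstacle is the reverse direction, specifically verifying that the chain-supremum defining $G$ is finite and satisfies $\vect{g}(\vect{x}) \in \partial G(\vect{x})$ — the nontrivial half of Rockafellar's theorem. A secondary technical point is carrying out the differentiability argument relative to the lower-dimensional affine hull, rather than in $\RR^n$ where the subdifferential would be unbounded in the $\vect{1}_n$ direction.
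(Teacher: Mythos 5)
Your proof is correct and follows essentially the same route as the paper: both directions hinge on Rockafellar's theorem that (strictly) cyclically monotone maps are exactly subgradient selections of (strictly) convex functions, combined with the fact that a convex function with a continuous subgradient is differentiable \cite[Proposition 17.41]{bc11}. The only difference is one of completeness --- you explicitly carry out the telescoping argument for the forward direction, the chain-supremum construction of $G$, and the strictness upgrade via the no-affine-segment argument, all of which the paper simply delegates to \cite{roc70_paper} with the remark that the strict version follows ``just as easily.''
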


\begin{proof}[Proof of Theorem~\ref{thm:cyc_mon_convex}]
	Per a theorem of Rockafellar (\cite{roc70_paper}, see also Theorem 24.8 in \cite{roc70_book}), a function $\vect{g}$ is cyclically monotone if and only if it is a subgradient of a convex function $G$. The proof of this fact shows just as easily that a function is strictly cyclically monotone if and only if it is a subgradient of a strictly convex function.
	
	Consider a differentiable, strictly convex function $G$. Its gradient is continuous (see \cite[Theorem 25.5]{roc70_book}). Conversely, consider a continuous, strictly cyclically monotone vector-valued function $\vect{g}$. As we just discussed, it is a subgradient of some strictly convex function $G$. A convex function with a continuous subgradient is differentiable \cite[Proposition 17.41]{bc11}.
\end{proof}

This means that the conditions on $\vect{g}$ in Definition~\ref{def:qa_ax} are precisely those necessary to let $\vect{g}$ be any function that it could be in our original definition of quasi-arithmetic pooling (Definition~\ref{def:qa_pooling}). Our new definition is thus equivalent to the old one (after normalizing weights to add to $1$).\\

We now discuss our axioms for pooling operators that will again capture the class of QA pooling operators. We will keep the weight additivity, commutativity, associativity, and idempotence verbatim from our discussion of the $n = 2$ case. We will slightly strengthen the continuity argument (see below).

We will also add a new axiom, \emph{subtraction}, which states that if $\Pi_1 \oplus \Pi_2 = \Pi_1 \oplus \Pi_3$ then $\Pi_2 = \Pi_3$. Subtraction in the $n = 2$ case follows from monotonicity; in this case, however, we the subtraction axiom will help us state the monotonicity axiom. In particular, it allows us to make the following definition, which essentially extends the notion of pooling to allow for negative weights.

\begin{defin} \label{def:p}
	Let $\oplus$ be a pooling operator satisfying weight additivity, commutativity, associativity, and subtraction. Fix $\vect{p}_1, \dots, \vect{p}_k \in \mathcal{D}$. Define a function $\vect{p}: \Delta^k \to \mathcal{D}$ (with $\vect{p}_1, \dots, \vect{p}_k$ serving as implicit arguments) defined by
	\[\vect{p}(w_1, \dots, w_k) = \prb \parens{\bigoplus_{i = 1}^k (\vect{p}_i, w_i)}.\]
	We extend the definition of $\vect{p}$ to a partial function on $H_k(1)$, as follows: given input $(w_1, \dots, w_k)$, let $S \subseteq [k]$ be the set of indices $i$ such that $w_i < 0$ and $T \subseteq [k]$ be the set of indices $i$ such that $w_i > 0$. We define $\vect{p}(w_1, \dots, w_k)$ to be the $\vect{q} \in \mathcal{D}$ such that
	\[(\vect{q}, 1) \oplus \parens{\bigoplus_{i \in S} (\vect{p}_i, -w_i)} = \bigoplus_{i \in T} (\vect{p}_i, w_i).\]
	Note that $\vect{q}$ is not guaranteed to exist, which is why we call $\vect{p}$ a partial function. However, if $\vect{q}$ exists then it is unique, by the subtraction axiom.
\end{defin}

We can now state the full axiomatization, including the monotonicity axiom.

\begin{defin}[Axioms for pooling operators] \label{def:wop}
	For a pooling operator $\oplus$ on $\mathcal{D}$, we define the following axioms.
	\begin{enumerate}
		\item \textbf{Weight additivity}: $\wt(\Pi_1 \oplus \Pi_2) = \wt(\Pi_1) + \wt(\Pi_2)$ for every $\Pi_1, \Pi_2$.
		\item \textbf{Commutativity}: $\Pi_1 \oplus \Pi_2 = \Pi_2 \oplus \Pi_1$ for every $\Pi_1, \Pi_2$.
		\item \textbf{Associativity}: $\Pi_1 \oplus (\Pi_2 \oplus \Pi_3) = (\Pi_1 \oplus \Pi_2) \oplus \Pi_3$ for every $\Pi_1, \Pi_2, \Pi_3$.
		\item \textbf{Continuity}: For every positive integer $k$ and $\vect{p}_1, \dots, \vect{p}_k$, the quantity\footnote{The continuity axiom is only well-defined conditioned on $\oplus$ being associative, which is fine for our purposes. We allow a proper subset of weights to be zero by defining the aggregate to ignore forecasts with weight zero.}
		\[\prb \parens{\bigoplus_{i = 1}^k (\vect{p}_i, w_i)}\]
		is a continuous function of $(w_1, \dots, w_k)$ on $\RR_{\ge 0}^k \setminus \{\vect{0}\}$.
		\item \textbf{Idempotence}: For every $\Pi_1$ and $\Pi_2$, if $\prb(\Pi_1) = \prb(\Pi_2)$ then $\prb(\Pi_1 \oplus \Pi_2) = \prb(\Pi_1)$.
		\item \textbf{Subtraction}: If $\Pi_1 \oplus \Pi_2 = \Pi_1 \oplus \Pi_3$ then $\Pi_2 = \Pi_3$.
		\item \textbf{Monotonicity}: There exist vectors $\vect{p}_1, \dots, \vect{p}_n \in \mathcal{D}$ such that $\vect{p}$ (as in Definition~\ref{def:p}) is a strictly cyclically monotone function from its domain to $\RR^n$.
	\end{enumerate}
\end{defin}

This monotonicity axiom essentially extends our previous monotonicity axiom (in Definition~\ref{def:wop_n2}) to a multi-dimensional setting. It states that there are $n$ ``anchor points" in $\mathcal{D}$ such that the function $\vect{p}$ from weight vectors to $\mathcal{D}$ that pools the anchor points with the weights given as input obeys a notion of monotonicity (namely cyclical monotonicity). Informally, this means that the vector of weights that one would need to give to the anchor points in order to arrive at a forecast $\vect{p}$ ``correlates" with the forecast $\vect{p}$ itself.\\

We now state the main theorem of our axiomatization.

\begin{theorem} \label{thm:representation}
	A pooling operator satisfies the axioms in Definition~\ref{def:wop} if and only if it is a QA pooling operator as in Definition~\ref{def:qa_ax}.\footnote{Recall that Definition~\ref{def:qa_ax} is narrower than Definition~\ref{def:qa_pooling}, since it excludes QA pools with respect to scoring rules that do not have convex exposure. Without convex exposure, the associativity, subtraction, and monotonicity axioms may be violated. For example, consider $\Pi_i = (\delta_i, 1)$ for $i = 1, 2, 3$, for the scoring rule with expected reward function $G(\vect{x}) = x_1^4 + x_2^4 + x_3^4$. We have that $(\Pi_1 \oplus_\vect{g} \Pi_2) \oplus_\vect{g} \Pi_3 = ((1/2, 1/2, 0), 2) \oplus_\vect{g} (\delta_3, 1) \approx ((0.182, 0.182, 0.635), 3)$, whereas $\Pi_1 \oplus_\vect{g} (\Pi_2 \oplus_\vect{g} \Pi_3) = ((0.635, 0.182, 0.182), 3)$}.
\end{theorem}

\begin{proof}
	We begin by noting the following fact, which follows from results in \cite[\S26]{roc70_book}.
	
	\begin{prop} \label{prop:g_inverse}
		A strictly cyclically monotone function $\vect{g}: \mathcal{D} \to \RR^n$ is injective, and its inverse $\vect{g}^{-1}$ is strictly cyclically monotone and continuous.\footnote{Why can't we apply this result again to $\vect{g}^{-1}$ to conclude that $\vect{g}$ is continuous, even though we did not assume it to be? The reason is that the proof of continuity relies on the convexity of $\mathcal{D}$; if $\vect{g}$ is discontinuous then the domain of $\vect{g}^{-1}$ may not be convex (or even connected), so we cannot apply the result to $\vect{g}^{-1}$.}
	\end{prop}
	
	We provide a partial proof below; it relies on the following observation.
	
	\begin{remark} \label{remark:cyc_mon_equiv}
		We can instead write the condition as
		\[\sum_{i = 1}^k (\vect{g}(\vect{x}_i) - \vect{g}(\vect{x}_{i - 1})) \cdot \vect{x}_i \ge 0.\]
		This is equivalent to the condition in Definition~\ref{def:cyc_mon}, because it is the same statement (with rearranged terms) when the $\vect{x}_i$'s are listed in reverse order.
	\end{remark}
	
	\begin{proof}
		First, suppose that $\vect{g}(\vect{x}) = \vect{g}(\vect{y})$. Then
		\[\vect{g}(\vect{x})(\vect{x} - \vect{y}) + \vect{g}(\vect{y})(\vect{y} - \vect{x}) = 0.\]
		Since $\vect{g}$ is \emph{strictly} cyclically monotone, this implies that $\vect{x} = \vect{y}$. (Note that we only use two-cycle monotonicity.)
		
		We now show that $\vect{g}^{-1}$ is strictly cyclically monotone. That is, we wish to show that
		\[\sum_{i = 1}^k \vect{x}_i \cdot (\vect{g}^{-1}(\vect{x}_i) - \vect{g}^{-1}(\vect{x}_{i - 1})) > 0\]
		for any $\vect{x}_1, \dots, \vect{x}_k = \vect{x}_0$ that are not all the same. (See Remark~\ref{remark:cyc_mon_equiv}.) By the cyclical monotonicity of $\vect{g}$, we have that
		\[\sum_{i = 1}^k \vect{g}(\vect{p}_i) \cdot (\vect{p}_i - \vect{p}_{i - 1}) > 0\]
		(the strictness of the inequality follows by the injectivity of $\vect{g}$: if $\vect{x}_i \neq \vect{x}_j$ then $\vect{p}_i \neq \vect{p}_j$). This means that
		\[\sum_{i = 1}^k \vect{x}_i \cdot (\vect{g}^{-1}(\vect{x}_i) - \vect{g}^{-1}(\vect{x}_{i - 1})) > 0,\]
		as desired. As for continuity, we defer to \cite[Theorem 26.5]{roc70_book}.
	\end{proof}
	
	Back to the proof of Theorem~\ref{thm:representation}, we first prove that any such $\oplus_{\vect{g}}$ satisfies the stated axioms. Weight additivity, commutativity, associativity, and idempotence are clear. Continuity follows from the formula
	\[\prb((\vect{p}_1, w_1) \oplus_{\vect{g}} (\vect{p}_2, w_2)) = \vect{g}^{-1} \parens{\frac{w_1 \vect{g}(\vect{p}_1) + w_2 \vect{g}(\vect{p}_2)}{w_1 + w_2}},\]
	noting that $\vect{g}^{-1}$ is continuous by Proposition~\ref{prop:g_inverse}. Likewise, subtraction follows from the fact that $\vect{g}$ is injective (by Proposition~\ref{prop:g_inverse}), as is $\vect{g}^{-1}$ (likewise). Monotonicity remains.
	
	The range of $\vect{g}$ contains an open subset\footnote{This follows from the invariance of domain theorem, which states that the image of an open subset of a manifold under an injective continuous map is open.} of $H_n(0)$, so in particular it contains the vertices of some translated and dilated copy of the standard simplex. That is, there are $n$ points $\vect{x}_1, \dots, \vect{x}_n$ in the range of $\vect{g}$ for which there is a positive scalar $a$ and vector $\vect{b}$ such that $a \delta_i + \vect{b} = \vect{x}_i$ for every $i$. (Here $\delta_i$ is the $i$-th standard basis vector in $\RR^n$.) We will let $\vect{p}_i$ be the pre-image of $\vect{x}_i$ under $\vect{g}$, so that $\vect{g}(\vect{p}_i) = a \delta_i + \vect{b}$.
	
	Observe that for any $\vect{w}$ in the domain of $\vect{p}$, we have
	\[\vect{g}(\vect{p}(\vect{w})) = \sum_{i = 1}^n w_i \vect{g}(\vect{p}_i) = \sum_{i = 1}^n w_i(a \delta_i + \vect{b}) = a\vect{w} + \vect{b},\]
	so
	\[\vect{p}(\vect{w}) = \vect{g}^{-1}(a\vect{w} + \vect{b}).\]
	We have that $\vect{g}^{-1}$ is strictly cyclically monotone (by Proposition~\ref{prop:g_inverse}), and it is easy to verify that for any strictly cyclically monotone function $\vect{f}$ and any $a > 0$ and $\vect{b}$, $\vect{f}(a\vect{x} + \vect{b})$ is a strictly cyclically monotone function of $\vect{x}$. Therefore, $\vect{p}(\vect{w}) = \vect{g}^{-1}(a\vect{w} + \vect{b})$ is strictly cyclically monotone, as desired.\\
	
	Now we prove the converse. Assume that we have a pooling operator $\oplus$ satisfying the axioms in Definition~\ref{def:wop}. We wish to show that $\oplus$ is $\oplus_{\vect{g}}$ for some $\vect{g}: \mathcal{D} \to H_n(0)$.
	
	For the remainder of this proof, let $\vect{p}_1, \dots, \vect{p}_n$ be vectors certifying the monotonicity of $\oplus$, and let $\vect{p}(\cdot)$ be as in Definition~\ref{def:p}.
	
	For any $\vect{q} \in \mathcal{D}$, let $\vect{g}(\vect{q}) := \vect{w} - \frac{1}{n} \vect{1}_n$, where $\vect{w} \in H_n(1)$ is such that $\vect{p}(\vect{w}) = \vect{q}$ and $\vect{1}_n$ is the all-ones vector. This raises the question of well-definedness: does this $\vect{w}$ necessarily exist, and if so, is it unique? The following claim shows that this is indeed the case.
	
	\begin{claim} \label{claim:p_bijective}
		The function $\vect{p}$, from the subset of $H_n(1)$ where it is defined to $\mathcal{D}$, is bijective.
	\end{claim}
	
	\begin{proof}
		The fact that $\vect{p}$ is injective follows from the fact that it is strictly cyclically monotone (see Proposition~\ref{prop:g_inverse}). We now show that $\vect{p}$ is surjective.
		
		Let $\vect{q} \in \mathcal{D}$. Define the function $\tilde{\vect{p}}: \Delta^{n + 1} \to \mathcal{D}$ by
		\[\tilde{\vect{p}}(w_1, \dots, w_{n + 1}) := \prb \parens{\parens{\bigoplus_{i = 1}^n (\vect{p}_i, w_i)} \oplus (\vect{q}, w_{n + 1})}.\]
		Since $\tilde{\vect{p}}$ is a continuous map\footnote{By the continuity axiom; here we use the more generalized form we stated earlier.} from $\Delta^{n + 1}$ (an $n$-dimensional manifold) to $\mathcal{D}$ (an $(n - 1)$-dimensional manifold), $\tilde{\vect{p}}$ is not injective.\footnote{This follows e.g. from the Borsuk-Ulam theorem.} So in particular, let $\vect{w}_1 \neq \vect{w}_2 \in \Delta^{n + 1}$ be such that $\vect{\tilde{p}}(\vect{w}_1) = \vect{\tilde{p}}(\vect{w}_2)$. That is, we have
		
		\begin{equation} \label{eq:ptilde_not_injective}
			\parens{\bigoplus_{i = 1}^n (\vect{p}_i, w_{1, i})} \oplus (\vect{q}, w_{1, n + 1}) = \parens{\bigoplus_{i = 1}^n (\vect{p}_i, w_{2, i})} \oplus (\vect{q}, w_{2, n + 1}).
		\end{equation}
		
		Observe that $w_{1, n + 1} \neq w_{2, n + 2}$; for otherwise it would follows from the subtraction axiom that two different combinations of the $\vect{p}_i$'s would give the same probability, contradicting the fact that $\vect{p}$ is injective. Without loss of generality, assume that $w_{1, n + 1} > w_{2, n + 1}$. We can rearrange the terms in Equation~\ref{eq:ptilde_not_injective} to look as follows.
		\[(\vect{q}, w_{1, n + 1} - w_{2, n + 1}) \oplus \parens{\bigoplus_{i \in S} (\vect{p}_i, v_i)} = \bigoplus_{i \in T \subseteq [n] \setminus S} (\vect{p}_i, v_i)\]
		for some positive $v_1, \dots, v_n$. By the distributive property, we may multiply all weights by $\frac{1}{w_{1, n + 1} - w_{2, n + 1}}$. The result will be an equation as in Definition~\ref{def:p}, certifying that $\vect{q}$ is in the range of the function $\vect{p}$, as desired.
	\end{proof}
	
	We return to our main proof, now that we have shown that our function $\vect{g}(\vect{q}) := \vect{w} - \frac{1}{n} \vect{1}$, where $\vect{w} \in H_n(1)$ is such that $\vect{p}(\vect{w}) = \vect{q}$, is well-defined. In fact, we can simply write $\vect{g}(\vect{q}) = \vect{p}^{-1}(\vect{q}) - \frac{1}{n} \vect{1}$. (The vector $\frac{1}{n} \vect{1}$ is fairly arbitrary; it only serves the purpose of forcing the range of $\vect{g}$ to lie in $H_n(0)$ instead of $H_n(1)$.)
	
	We first show that the equation that defines $\oplus_{\vect{g}}$ holds -- that is, that if $(\vect{q}_1, v_1) \oplus (\vect{q}_2, v_2) = (\vect{q}, v_1 + v_2)$ (with $v_1, v_2 \ge 0$, not both zero), then
	\[\vect{g}(\vect{q}) = \frac{v_1 \vect{g}(\vect{q}_1) + v_2 \vect{g}(\vect{q}_2)}{v_1 + v_2}.\]
	Let $\vect{w}_1, \vect{w}_2 \in H_n(1)$ be such that $\vect{q}_1 = \vect{p}(\vect{w}_1)$ and $\vect{q}_2 = \vect{p}(\vect{w}_2)$. It is intuitive that $\vect{q} = \vect{p} \parens{\frac{v_1 \vect{w_1} + v_2 \vect{w_2}}{v_1 + v_2}}$, but we show this formally.
	
	\begin{claim} \label{claim:p_linear}
		Given $\vect{q}_1, \vect{q}_2 \in \mathcal{D}$ with $\vect{q}_1 = \vect{p}(\vect{w}_1), \vect{q}_2 = \vect{p}(\vect{w}_2)$, and $0 \le \alpha \le 1$, we have
		\[\vect{p}(\alpha \vect{w}_1 + (1 - \alpha) \vect{w}_2) = (\vect{q}_1, \alpha) \oplus (\vect{q}_2, 1 - \alpha).\]
	\end{claim}
	
	\begin{proof}
		Note that
		\begin{align*}
			(\vect{q}_1, 1) \oplus \parens{\bigoplus_{i: w_{1, i} < 0} (\vect{p}_i, w_{1, i})} &= \bigoplus_{i: w_{1, i} > 0} (\vect{p}_i, w_{1, i})\\
			(\vect{q}_2, 1) \oplus \parens{\bigoplus_{i: w_{2, i} < 0} (\vect{p}_i, w_{2, i})} &= \bigoplus_{i: w_{2, i} > 0} (\vect{p}_i, w_{2, i}).
		\end{align*}
		Applying the distributive property to the two above equations with constants $\alpha$ and $1 - \alpha$, respectively, and adding them, we get that
		\begin{align*}
			&\parens{\vect{q}_1, \alpha} \oplus \parens{\vect{q}_2, 1 - \alpha} \oplus \parens{\bigoplus_{i: w_{1, i} < 0} (\vect{p}_i, \alpha w_{1, i})} \oplus \parens{\bigoplus_{i: w_{2, i} < 0} (\vect{p}_i, (1 - \alpha) w_{2, i})}
			\\&= \parens{\bigoplus_{i: w_{1, i} > 0} (\vect{p}_i, \alpha w_{1, i})} \oplus \parens{\bigoplus_{i: w_{2, i} > 0} (\vect{p}_i, (1 - \alpha) w_{2, i})}.
		\end{align*}
		We have that $\parens{\vect{q}_1, \alpha} \oplus \parens{\vect{q}_2, 1 - \alpha} = (\vect{q}, 1)$. It follows (after rearranging terms, from Definition~\ref{def:p}) that $\vect{q} = \vect{p} \parens{\frac{v_1 \vect{w_1} + v_2 \vect{w_2}}{v_1 + v_2}}$.
	\end{proof}
	
	Applying Claim~\ref{claim:p_linear} with $\alpha = \frac{v_1}{v_1 + v_2}$, we find that
	\[\vect{g}(\vect{q}) = \frac{v_1 \vect{w_1} + v_2 \vect{w_2}}{v_1 + v_2} - \frac{1}{n} \vect{1} = \frac{v_1 \parens{\vect{g}(\vect{q}_1) + \frac{1}{n} \vect{1}} + v_2\parens{\vect{g}(\vect{q}_2) + \frac{1}{n} \vect{1}}}{v_1 + v_2} - \frac{1}{n} \vect{1} = \frac{v_1 \vect{g}(\vect{q}_1) + v_2 \vect{g}(\vect{q}_2)}{v_1 + v_2},\]
	as desired.\\
	
	It remains to show that $\vect{g}$ is continuous, strictly cyclically monotone, and has convex range. By the monotonicity axiom, $\vect{p}$ is strictly cyclically monotone. It follows by Proposition~\ref{prop:g_inverse} that its inverse its continuous and strictly cyclically monotone. Therefore, $\vect{g}$ is continuous and cyclically monotone (as it is simply a translation of $\vect{p}^{-1}(\vect{q})$ by $\frac{1}{n} \vect{1}$).
	
	Finally, to show that $\vect{g}$ has convex range, we wish to show that $\vect{p}^{-1}$ has convex range; or, in other words, that the domain on which $\vect{p}$ is defined is convex. And indeed, this follows straightforwardly from Claim~\ref{claim:p_linear}. Let $\vect{w}_1, \vect{w}_2$ be in the domain of $\vect{p}$, with $\vect{p}(\vect{w}_1) = \vect{q}_1, \vect{p}(\vect{w}_2) = \vect{q}_2$. Then for any $0 \le \alpha \le 1$, we have that
	\[\vect{p}(\alpha \vect{w}_1 + (1 - \alpha) \vect{w}_2) = (\vect{q}_1, \alpha) \oplus (\vect{q}_2, 1 - \alpha),\]
	so in particular $\alpha \vect{w}_1 + (1 - \alpha) \vect{w}_2$ is in the domain of $\vect{p}$. This concludes the proof.
\end{proof}
	\section{The convex exposure property} \label{appx:convex_exposure}
Several of our results have been contingent on the convex exposure property. In this e-companion, we consider when the convex exposure property holds. Our first result is that it always holds in the case of a binary outcome (i.e.\ $n = 2$).

\begin{prop}\leavevmode \label{prop:n2_convex_exposure}
	If $n = 2$, every (continuous) proper scoring rule has convex exposure.
\end{prop}

\begin{proof}
	Consider a proper scoring rule $s$ with forecast domain $\mathcal{D}$. Since $\mathcal{D}$ is connected and $\vect{g}$ is continuous on $\mathcal{D}$, the range of $\vect{g}$ over $\mathcal{D}$ is connected. In the $n = 2$ outcome case, the range of $\vect{g}$ lies on the line $\{(x_1, x_2): x_1 + x_2 = 0\}$, and a connected subset of a line is convex.
\end{proof}

As we shall see, the convex exposure property holds for nearly all of the most commonly used scoring rules even in higher dimensions.

(A note on notation: in this section we use $p_j$ instead of $p(j)$ to refer to the $j$-th coordinate of a probability distribution $\vect{p}$.)\\

We now show that scoring rules that --- like the logarithmic scoring rule --- ``go off to infinity" have convex exposure.

\begin{prop} \label{prop:steep_convex}
Let $s$ be a proper scoring rule whose forecast domain is the interior of $\Delta^n$, such that for any point $\vect{x}$ on the boundary of $\Delta^n$, and for any sequence $\vect{x}_1, \vect{x}_2, \dots$ converging to $\vect{x}$, $\lim_{k \to \infty} \norm{\vect{g}(\vect{x}_k)}_2 = \infty$.\footnote{\cite{acv13} say that $G$ is a \emph{pseudo-barrier function} if this condition is satisfied.} Then $s$ has convex exposure.
\end{prop}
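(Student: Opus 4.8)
The plan is to recognize that the steepness hypothesis is precisely the classical notion of \emph{essential smoothness} from convex analysis, and then to invoke Rockafellar's theorem on Legendre-type functions to conclude directly that the range of $\vect{g}$ is an open convex set.

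First I would reduce to a full-dimensional setting. The domain $\mathcal{D}$ is an open convex subset of the $(n-1)$-dimensional simplex $\Delta^n$, so fixing an affine parametrization $\phi: U \to \mathcal{D}$ (e.g. by the first $n-1$ coordinates) identifies $\mathcal{D}$ with an open convex set $U \subseteq \RR^{n-1}$, and $G$ with a strictly convex, continuously differentiable function $\hat{G} := G \circ \phi$ on $U$ (strict convexity and differentiability coming from Proposition~\ref{prop:G_convex} and the standing continuity assumption). Writing $L$ for the (constant) linear part of $\phi$, the chain rule gives $\nabla \hat{G} = L^\top \circ \vect{g} \circ \phi$, and since $L$ is a linear isomorphism onto $H_n(0)$ the map $L^\top$ restricts to a linear isomorphism $H_n(0) \to \RR^{n-1}$; hence $\operatorname{range}(\nabla \hat{G}) = L^\top(\operatorname{range}(\vect{g}))$, and convexity of one range is equivalent to convexity of the other. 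It therefore suffices to show that $\operatorname{range}(\nabla \hat{G})$ is convex.

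Next I would pass to a closed convex function on all of $\RR^{n-1}$. Define $\bar{G}$ to be the closure (lower semicontinuous hull) of the function equal to $\hat{G}$ on $U$ and $+\infty$ elsewhere. Because $U$ is open and convex, $\operatorname{int}(\operatorname{dom} \bar{G}) = U$ and $\bar{G}$ agrees with $\hat{G}$ on $U$, so $\bar{G}$ is differentiable on $\operatorname{int}(\operatorname{dom} \bar{G})$ with gradient $\nabla \hat{G}$. The hypothesis of the proposition states exactly that $\norm{\vect{g}(\vect{x}_k)}_2 \to \infty$ along every sequence approaching the boundary; transported through the isomorphism $L^\top$ (which preserves the property of a norm diverging) this says $\norm{\nabla \hat{G}(u_k)}_2 \to \infty$ along every sequence in $U$ approaching a boundary point of $U$. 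This is precisely the essential smoothness condition of \cite[\S26]{roc70_book}. Combined with the strict convexity of $\hat{G}$ (which yields essential strict convexity), $\bar{G}$ is a convex function of Legendre type.

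Finally I would invoke \cite[Theorem 26.5]{roc70_book}: for a convex function of Legendre type, the gradient map $\nabla \bar{G}$ is a homeomorphism of $\operatorname{int}(\operatorname{dom} \bar{G}) = U$ onto the open convex set $\operatorname{int}(\operatorname{dom} \bar{G}^*)$, where $\bar{G}^*$ is the convex conjugate. In particular $\operatorname{range}(\nabla \hat{G}) = \operatorname{int}(\operatorname{dom} \bar{G}^*)$ is convex, and by the reduction above so is $\operatorname{range}(\vect{g})$, which is exactly the assertion that $s$ has convex exposure. The main obstacle is the careful bookkeeping in these two reductions: verifying that the closure operation leaves the interior of the domain equal to $\mathcal{D}$ (so that Rockafellar's gradient map coincides with our $\vect{g}$), and checking the dimensional and linear-algebraic details of the passage to $\RR^{n-1}$. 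Once these are in place, the result is an immediate consequence of the classical theorem.
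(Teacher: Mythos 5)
Your proposal is correct and follows essentially the same route as the paper, which likewise reduces the claim to Rockafellar's Theorem 26.5 on convex functions of Legendre type (the paper simply cites that theorem without spelling out the reduction to $\RR^{n-1}$, the closure construction, or the verification of essential smoothness and essential strict convexity). Your write-up supplies exactly the bookkeeping the paper delegates to the reference, so nothing further is needed.
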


This is a statement of convex analysis --- namely that if $\norm{\vect{g}}_2$ approaches $\infty$ on the boundary of a convex set, then the range of $\vect{g}$ is convex (assuming $\vect{g}$ is the gradient of a differentiable convex function). We refer the reader to \cite[Theorem 26.5]{roc70_book} for the proof. In non-pathological cases, the basic intuition is that \emph{every} $\vect{v} \in \{\vect{x}: \sum_i x_i = 0\}$ is the gradient of $G$ at some point. In these cases, $\nabla G(\vect{x}) = \vect{v}$ where $\vect{x}$ minimizes $G(\vect{v}) - \vect{v} \cdot \vect{x}$; the $\lim_{k \to \infty} \norm{g(\vect{x}_k)}_2 = \infty$ condition means that this minimum does not occur on the boundary of $\Delta^n$.

\begin{corollary} \label{cor:steep}
The following scoring rules have convex exposure:
\begin{itemize}
\item The logarithmic scoring rule.
\item The scoring rule given by $G(\vect{p}) = -\sum_j p_j^{\gamma}$ for $\gamma \in (0, 1)$.
\item The scoring rule given by $G(\vect{p}) = -\sum_j \ln p_j$, which can be thought of as the limit of the $G$ in the previous bullet point as $\gamma \to 0$.\footnote{This is a natural way to think of this scoring rule because $\nabla G(\vect{p}) = -(p_1^{-1}, \dots, p_n^{-1})$.}
\item The scoring rule $hs$ given by $G_{hs}(\vect{p}) = -\prod_j p_j^{1/n}$.
\end{itemize}
\end{corollary}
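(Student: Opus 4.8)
The plan is to obtain all four cases as applications of Proposition~\ref{prop:steep_convex}. Since each forecast domain is $\text{int}(\Delta^n)$, whose relative boundary is precisely the set of points of $\Delta^n$ having at least one vanishing coordinate, it suffices to check, for each $G$, that $\norm{\vect{g}(\vect{x}_k)}_2 \to \infty$ along every sequence $\vect{x}_k$ whose limit has a zero coordinate. (One also needs each $G$ to be strictly convex so that it genuinely defines a proper scoring rule, which I address alongside each case.) A convenient reduction comes from Remark~\ref{remark:mod_T1n}: the exposure function $\vect{g}$ is the raw gradient $\nabla G$ projected onto $H_n(0)$, and this projection leaves every coordinate difference $(\nabla G)_i - (\nabla G)_j$ unchanged. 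Since $\norm{\vect{v}}_2 \ge \tfrac12 \abs{v_i - v_j}$ for any vector $\vect{v}$ and any indices $i, j$ (as $\abs{v_i - v_j} \le 2\norm{\vect{v}}_\infty \le 2\norm{\vect{v}}_2$), it is enough to exhibit a pair $i, j$ for which $\abs{\partial G/\partial p_i - \partial G/\partial p_j} \to \infty$.

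For the three separable rules this is immediate. Writing $g_j := \partial G/\partial p_j$, we have $g_j = \ln p_j + 1$ (logarithmic), $g_j = -\abs{\gamma}\, p_j^{\gamma - 1}$ (both power rules), and $g_j = -1/p_j$ (the $-\sum_j \ln p_j$ rule); in every case $g_j \to -\infty$ as $p_j \to 0$. Taking $i$ to index a coordinate with $(\vect{x}_k)_i \to 0$ and $j$ to index a coordinate bounded away from $0$ (one exists, since the coordinates sum to $1$, so some coordinate is at least $1/n$), the difference $g_i - g_j$ tends to $-\infty$. Strict convexity is equally routine here: each such $G$ is a sum of one-variable functions with strictly positive second derivatives ($1/p_j^2$; $\gamma(1-\gamma)p_j^{\gamma-2}$ or $\gamma(\gamma-1)p_j^{\gamma-2}$; and $1/p_j$ respectively) on $(0,\infty)$, so $G$ is strictly convex on $\RR^n_{>0}$ and hence on $\text{int}(\Delta^n)$.

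The geometric-mean rule $G_{hs}(\vect{p}) = -\prod_j p_j^{1/n}$ is the crux, since it is not separable and several coordinates may approach $0$ at different rates. Here $g_j = -\tfrac{1}{n p_j}\, Q$ with $Q := \prod_l p_l^{1/n}$. I would take $i$ to index the smallest coordinate and $j$ the largest. Because the geometric mean does not exceed the arithmetic mean, $Q \le 1/n$, so $Q/p_{\max} \le nQ \le 1$ stays bounded; meanwhile $Q/p_{\min} = \big(\prod_l (p_l/p_{\min})\big)^{1/n}$ is a product of factors each at least $1$, and the factor from the largest coordinate is at least $(1/n)/p_{\min} \to \infty$. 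Hence $g_i - g_j = -\tfrac{1}{n}(Q/p_{\min} - Q/p_{\max}) \to -\infty$, giving the blow-up. For strict convexity I would compute the Hessian of $Q$ and observe that $\vect{v}^\top(\nabla^2 Q)\vect{v}$ is a positive multiple of $\big(\sum_i v_i/p_i\big)^2 - n\sum_i v_i^2/p_i^2$, which Cauchy--Schwarz shows is $\le 0$ with equality only when $\vect{v} \propto \vect{p}$; since $\vect{p} \notin H_n(0)$, this form is negative definite on the tangent space $H_n(0)$ of the simplex, so $Q$ is strictly concave and $G_{hs}$ strictly convex on $\text{int}(\Delta^n)$. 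The main obstacle is exactly this non-separable case: showing the gradient blows up despite the competing vanishing factors in $Q$, and that strict convexity survives restriction to the simplex even though the geometric mean is only weakly concave (being homogeneous) on all of $\RR^n_{>0}$.
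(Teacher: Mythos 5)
Your proof is correct and takes the route the paper intends: Corollary~\ref{cor:steep} is presented as an immediate application of Proposition~\ref{prop:steep_convex}, and you supply exactly the boundary blow-up checks (including the only nontrivial one, for $G_{hs}$) together with the strict-convexity verifications, where your Hessian/Cauchy--Schwarz argument for $G_{hs}$ restricted to $H_n(0)$ is a nice self-contained substitute for the paper's appeal to Mahler's inequality. Two harmless slips: your ``respectively'' list of second derivatives transposes the entries for $\sum_j p_j \ln p_j$ (which is $1/p_j$) and $-\sum_j \ln p_j$ (which is $1/p_j^2$), and the factor contributed by the largest coordinate in the $hs$ analysis is $\left((1/n)/p_{\min}\right)^{1/n}$ rather than $(1/n)/p_{\min}$, which still diverges as $p_{\min} \to 0$.
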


\paragraph{\textbf{The $hs$ scoring rule}} The last of these scoring rules is a generalization of the scoring rule $hs(q) = 1 - \sqrt{\frac{1 - q}{q}}$ used in \cite{bb20} as part of proving their minimax theorem for randomized algorithms.\footnote{Here we are using the shorthand notation for the $n = 2$ outcome case discussed in Remark~3.5 of the main article.} The authors used this scoring rule as a key ingredient in their minimax theorem for randomized algorithms. The key property of the scoring rule was a result about its \emph{amplification} \cite[Lemma 3.10]{bb20}. The authors define a \emph{forecasting algorithm} to be a generalization of a randomized algorithm that outputs an estimated \emph{probability} that an output should be accepted. Then, roughly speaking, the authors show that given a forecasting algorithm $R$, it is possible to create a forecasting algorithm $R'$ that has a much larger expected score from the scoring rule $hs$ by combining running $R$ a small number of times and combining the outputs. This is an important new result in theoretical computer science and suggests that $hs$ deserves more attention.

Since additive and multiplicative constants are irrelevant, we may treat $hs(q) = -\frac{1}{2} \sqrt{\frac{1 - q}{q}}$. Observe that (in the case of two outcomes), the expected score $G_{hs}$ on a report of $q$ is
\[G_{hs}(q) = q \parens{-\frac{1}{2} \sqrt{\frac{1 - q}{q}}} + (1 - q) \parens{-\frac{1}{2} \sqrt{\frac{1 - q}{q}}} = -\sqrt{q(1 - q)}.\]
That is, $G_{hs}$ is precisely negative the geometric mean of $q$ and $1 - q$. This motivates us to generalize $hs$ to a setting with $n$ outcomes by setting
\[G_{hs}(\vect{p}) := -\prod_{i = 1}^n p_i^{1/n}.\]
It should not be obvious that this function is convex, but it turns out to be; this is the precise statement of an inequality known as Mahler's inequality \cite{wiki:mahler}.\\

Next we note that the quadratic scoring rule has convex exposure, since its exposure function $\vect{g}(\vect{p}) = 2\vect{p}$ (modulo $\vect{1}_n$ as discussed in Remark~3.12 of the main article) maps any convex set to a convex set.
\begin{prop}
The quadratic scoring rule has convex exposure.
\end{prop}

\emph{Spherical} scoring rules --- the third most studied proper scoring rules, after the quadratic and logarithmic rules --- also have convex exposure.
\begin{defin}[Spherical scoring rules]
\cite[Example 2]{gr07}
For any $\alpha > 1$, define the \emph{spherical scoring rule with parameter $\alpha$} to be the scoring rule given by
\[G_{\text{sph}, \alpha}(\vect{p}) := \parens{\sum_{i = 1}^n p_i^{\alpha}}^{1/\alpha}.\]
If the ``spherical scoring rule" is referenced with no parameter $\alpha$ given, $\alpha$ is presumed to equal $2$.
\end{defin}

\begin{prop} \label{prop:sph_convex}
For any $\alpha > 1$, the spherical scoring rule with parameter $\alpha$ has convex exposure.
\end{prop}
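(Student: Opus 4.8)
The plan is to compute the exposure function explicitly and reduce the claim to a concrete statement of convex analysis. Writing $G_{\text{sph},\alpha}(\vect{p}) = \norm{\vect{p}}_{\alpha}$, direct differentiation gives the full $\RR^n$ gradient $\nabla G(\vect{p}) = \parens{(p_j/\norm{\vect{p}}_\alpha)^{\alpha-1}}_{j=1}^n$. Let $\alpha' := \alpha/(\alpha-1)$ be the Hölder conjugate exponent, which satisfies $\alpha' \in (1,\infty)$ since $\alpha > 1$. A one-line computation using $(\alpha-1)\alpha' = \alpha$ shows $\norm{\nabla G(\vect{p})}_{\alpha'} = 1$, and clearly $\nabla G(\vect{p}) \ge 0$; conversely every $\vect{y} \ge 0$ with $\norm{\vect{y}}_{\alpha'} = 1$ is attained (its preimage is $p_j \propto y_j^{1/(\alpha-1)}$). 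Hence the range of $\nabla G$ is exactly $N := \{\vect{y} \ge 0 : \norm{\vect{y}}_{\alpha'} = 1\}$, the portion of the dual unit sphere in the nonnegative orthant. By Remark~\ref{remark:mod_T1n} the exposure function is $\nabla G$ read modulo $\vect{1}_n$, so convex exposure is equivalent to the image of $N$ in $\RR^n/T(\vect{1}_n)$ being convex; since the quotient map is linear and $N + \RR\vect{1}_n$ is a union of fibers, this is in turn equivalent to the ``tube'' $C := N + \RR\vect{1}_n$ being a convex subset of $\RR^n$. (Note that Proposition~\ref{prop:steep_convex} does not apply here, since $\mathcal{D} = \Delta^n$ is closed and $\vect{g}$ stays bounded, so a direct argument is needed.)

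Second, I would identify $C$ as the sublevel set of an explicit function. A point $\vect{z}$ lies in $C$ iff $\vect{z} - t\vect{1}_n \in N$ for some $t$, i.e. iff $\vect{z} - t\vect{1}_n \ge 0$ and $\norm{\vect{z} - t\vect{1}_n}_{\alpha'} = 1$. Nonnegativity forces $t \le m := \min_j z_j$, and on $(-\infty, m]$ the map $\phi(t) := \norm{\vect{z} - t\vect{1}_n}_{\alpha'}$ is continuous, tends to $+\infty$ as $t \to -\infty$, and is strictly decreasing (since $\phi(t)^{\alpha'} = \sum_j (z_j - t)^{\alpha'}$ has derivative $-\alpha' \sum_j (z_j - t)^{\alpha'-1} < 0$ for $t < m$, using $\alpha' > 1$). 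By the intermediate value theorem a solution $t \le m$ of $\phi(t) = 1$ exists iff $\phi(m) \le 1$. Therefore
\[ C = \{\vect{z} \in \RR^n : F(\vect{z}) \le 1\}, \qquad F(\vect{z}) := \norm{\vect{z} - (\min_j z_j)\vect{1}_n}_{\alpha'}. \]

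Finally, I would show $F$ is convex, so that $C$, being a sublevel set of a convex function, is convex. The $j$-th coordinate of the vector inside the norm is $u_j(\vect{z}) = z_j - \min_k z_k = z_j + \max_k(-z_k)$, which is a convex, nonnegative function of $\vect{z}$. The $\ell_{\alpha'}$ norm is convex and nondecreasing in each coordinate on the nonnegative orthant, so $F = \norm{\cdot}_{\alpha'} \circ (u_1, \dots, u_n)$ is convex by the standard rule that a nondecreasing convex function composed with convex functions is convex. This gives that $C$ is convex, hence the range of $\vect{g}$ is convex, and $s_{\text{sph},\alpha}$ has convex exposure.

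The main obstacle is the middle step: recognizing that convex exposure is equivalent to convexity of the tube $C$ over the spherical patch $N$, and then pinning $C$ down as a clean sublevel set via the monotonicity of $\phi$. Once $C$ is written as $\{F \le 1\}$, the convexity of $F$ through the composition rule is routine, and the gradient computation and dual-norm identification are straightforward. I would also double-check the boundary behavior, namely that $G$ remains differentiable where some $p_j = 0$ (which holds because $t \mapsto \abs{t}^\alpha$ is $C^1$ for $\alpha > 1$), to ensure the range of $\nabla G$ really is all of $N$.
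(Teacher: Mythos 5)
Your proof is correct, and it shares the paper's essential first step but finishes differently. Like the paper, you compute $\vect{g}(\vect{p}) = \norm{\vect{p}}_\alpha^{1-\alpha}(p_1^{\alpha-1},\dots,p_n^{\alpha-1})$ and identify its range as the non-negative portion $N$ of the unit sphere for the conjugate exponent $\alpha' = \alpha/(\alpha-1)$; the underlying fact driving both arguments is the convexity of the unit $\alpha'$-ball for $\alpha' > 1$. The divergence is in how the $\vect{1}_n$-quotient is handled. The paper argues pointwise and geometrically: a convex combination $w\vect{g}(\vect{p}) + (1-w)\vect{g}(\vect{q})$ lies in the non-negative part of the $\alpha'$-ball, and casting a ray from it in the $+\vect{1}_n$ direction meets the sphere again at a non-negative point, which is in the range of $\vect{g}$ and equivalent modulo $T(\vect{1}_n)$ (an intermediate-value argument that is essentially your analysis of $\phi$ restricted to the upward direction). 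You instead make the reduction to convexity of the full tube $C = N + \RR\vect{1}_n$ explicit and exhibit $C$ as the sublevel set $\{F \le 1\}$ of the convex function $F(\vect{z}) = \norm{\vect{z} - (\min_j z_j)\vect{1}_n}_{\alpha'}$, whose convexity follows from the composition rule because the inner functions $u_j(\vect{z}) = z_j + \max_k(-z_k)$ are convex and non-negative and the $\alpha'$-norm is coordinate-wise nondecreasing on the non-negative orthant (the non-negativity of the $u_j$ is exactly what makes that monotonicity sufficient, so your invocation of the rule is sound). Your route is somewhat more formal and self-contained --- it settles convexity of the entire range at once rather than producing a preimage for each pair of forecasts --- while the paper's ray-casting version has the advantage of directly yielding the geometric recipe for spherical QA pooling recorded in the remark that follows the proposition. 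Your added checks (that Proposition~\ref{prop:steep_convex} does not apply here, and that $G$ remains differentiable where some $p_j = 0$ because $\alpha > 1$) are correct and worth having.
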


\begin{proof}
Fix $\alpha > 1$. We will write $G$ in place of $G_{\text{sph}, \alpha}$. We have\footnote{As discussed in Remark~3.12 of the main article, the range of $\vect{g}$ should be thought of as modulo $T(\vect{1}_n)$. However, we find it convenient for this proof to think of it as lying in $\RR^n$ and project later.}
\begin{equation} \label{eq:g_sph}
\vect{g}(\vect{p}) = \parens{\sum_{j = 1}^n p_j^{\alpha}}^{(1/\alpha) - 1} (p_1^{\alpha - 1}, \dots, p_n^{\alpha - 1}).
\end{equation}
Now, define the \emph{$n$-dimensional unit $\beta$-sphere}, i.e. $\{\vect{x}: \sum_j x_j^{\beta} = 1\}$, and define the \emph{$n$-dimensional unit $\beta$-ball} correspondingly (i.e. with $\le$ in place of $=$). The range of $\vect{g}$ is precisely the part of the $n$-dimensional unit $\frac{\alpha}{\alpha - 1}$-sphere with all non-negative coordinates. Indeed, on the one hand, for any $\vect{p}$ we have
\[\sum_j g_j(\vect{p})^{\alpha/(\alpha - 1)} = \parens{\sum_j p_j^\alpha}^{-1} \cdot \sum_j p_j^\alpha = 1\]
(where $g_j(\vect{p})$ denotes the $j$-th coordinate of $\vect{g}(\vect{p})$ as in Equation~\ref{eq:g_sph}). On the other hand, given a point $\vect{x}$ on the unit $\frac{\alpha}{\alpha - 1}$-sphere with all non-negative coordinates,
\[\vect{p} = \parens{\sum_j x_j^{1/(\alpha - 1)}}^{-1} \parens{x_1^{1/(\alpha - 1)}, \dots, x_n^{1/(\alpha - 1)}}\]
lies in $\Delta^n$ and satisfies $\vect{g}(\vect{p}) = \vect{x}$.

The crucial point for us is that for $\beta > 1$, the unit $\beta$-ball is convex. This means that for any such $\beta$, the convex combination of any number of points on the unit $\beta$-sphere will lie in the unit $\beta$-ball. Since $\frac{\alpha}{\alpha - 1} > 1$ for $\alpha > 1$, we have that for arbitrary $\vect{p}, \vect{q} \in \Delta^n$ and $w \in [0, 1]$, $w \vect{g}(\vect{p}) + (1 - w) \vect{g}(\vect{q})$ lies in the unit $\beta$-ball --- in fact, in the part with all non-negative coordinates. Now, consider casting a ray from this convex combination point in the positive $\vect{1}_n$ direction. All points on this ray are equivalent to this point modulo $T(\vect{1}_n)$, and this ray will intersect the unit $\beta$-sphere at some point $\vect{x}$ with all non-negative coordinates. The point $\vect{p} \in \Delta^n$ with $\vect{g}(\vect{p}) = \vect{x}$ satisfies
\[\vect{g}(\vect{p}) = \vect{g}(\vect{p}) + (1 - w) \vect{g}(\vect{q}).\]
This completes the proof.
\end{proof}

\begin{remark}
The above proof gives a geometric interpretation of the QA pooling with respect to the spherical scoring rule, particularly for $\alpha = 2$. In the $\alpha = 2$ case, pooling amounts to taking the following steps:
\begin{enumerate}[label=(\arabic*)]
\item Scale each forecast so it lies on the unit sphere.
\item Take the weighted average of the resulting points in $\RR^n$.
\item Shift the resulting point in the positive $\vect{1}_n$ direction to the unique point in that direction that lies on the unit sphere.
\item Scale this point so that its coordinates add to $1$.
\end{enumerate}
\end{remark}

Finally we consider the parametrized family known as \emph{Tsallis scoring rules}, defined in \cite{tsa88}.
\begin{defin}[Tsallis scoring rules]
For $\gamma > 1$, the \emph{Tsallis scoring rule with parameter $\gamma$} is the rule given by
\[G_{\text{Tsa},\gamma}(\vect{p}) = \sum_{j = 1}^m p_j^\gamma.\]
\end{defin}
Setting $\gamma = 2$ above yields the quadratic scoring rule. Note also that we have already addressed the scoring rule given by $G(\vect{p}) = \pm \sum_j p_j^\gamma$ for $\gamma \le 1$ (except $\gamma = 0, 1$, which are degenerate), with the sign chosen to make $G$ convex: these scoring rules have convex exposure by Proposition~\ref{prop:steep_convex}. The following proposition completes our analysis for this natural class of scoring rules.

\begin{prop} \label{prop:tsallis}
For $\gamma \le 2$, the Tsallis scoring rule with parameter $\gamma$ has convex exposure. For $\gamma > 2$, this is not the case if $n > 2$.
\end{prop}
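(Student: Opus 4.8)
The plan is to compute the exposure function explicitly and reduce both claims to the convexity (or non-convexity) of a single ``$\mu$-ball.'' Treating $G_{\text{Tsa},\gamma}$ as a function of $n$ variables, its gradient is $\vect{g}(\vect{p}) = \gamma(p_1^{\gamma-1}, \dots, p_n^{\gamma-1})$. Writing $x_j = \gamma p_j^{\gamma-1}$ and $\mu := 1/(\gamma-1)$, the constraint $\sum_j p_j = 1$ becomes $\sum_j x_j^\mu = \gamma^\mu =: c$. Thus the range of $\vect{g}$ is exactly the portion of the surface $\{\vect{x}: \sum_j x_j^\mu = c\}$ lying in the non-negative orthant. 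The key observation is that $\mu \ge 1$ if and only if $\gamma \le 2$, and $x \mapsto x^\mu$ is convex precisely when $\mu \ge 1$; this is what puts the threshold at $\gamma = 2$.

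For the first claim ($\gamma \le 2$), I would run the ray-casting argument from the proof of Proposition~\ref{prop:sph_convex}, verbatim except for the value of the exponent. Since $\mu \ge 1$, the ``$\mu$-ball'' $\{\vect{x} \ge \vect{0}: \sum_j x_j^\mu \le c\}$ is convex, so any convex combination of two points $\vect{g}(\vect{p}), \vect{g}(\vect{q})$ on the surface lands inside this ball and in the non-negative orthant. Casting a ray from that point $\vect{v}$ in the $+\vect{1}_n$ direction, the quantity $\sum_j (v_j + t)^\mu$ increases continuously from a value $\le c$ at $t = 0$ to $\infty$, so by the intermediate value theorem it meets the surface at a non-negative point, which is $\vect{g}(\vect{p}')$ for some $\vect{p}' \in \Delta^n$ and is equivalent to the convex combination modulo $\vect{1}_n$ (recall Remark~\ref{remark:mod_T1n}). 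Hence the range is convex modulo $\vect{1}_n$.

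For the second claim ($\gamma > 2$, $n > 2$), I would exhibit an explicit failure. Take $\vect{p} = \delta_1$ and $\vect{q} = \delta_2$, so that $\vect{g}(\vect{p}) = \gamma\delta_1$ and $\vect{g}(\vect{q}) = \gamma\delta_2$, and consider their midpoint $\vect{v} = \tfrac{\gamma}{2}(\delta_1 + \delta_2)$. Convex exposure would require $\vect{v} \equiv \vect{g}(\vect{p}') \pmod{\vect{1}_n}$ for some $\vect{p}' \in \Delta^n$; matching coordinates and imposing $\sum_j p'_j = 1$ reduces this to solving $2(\tfrac{1}{2} + u)^\mu + (n-2)u^\mu = 1$ for some $u \ge 0$, where the nonnegativity $u \ge 0$ is forced by the $n-2 \ge 1$ zero coordinates (this is exactly where the hypothesis $n > 2$ enters). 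Because $\mu < 1$, the left-hand side is at least its value $2^{1-\mu} > 1$ at $u = 0$ and is increasing in $u$, so it never equals $1$; no such $\vect{p}'$ exists.

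The main obstacle is the second part: one must turn the non-convexity of the sublevel set of the concave function $\sum_j x_j^\mu$ (for $\mu < 1$) into a genuine failure of convex exposure modulo $\vect{1}_n$. The subtlety is that working modulo $\vect{1}_n$ could in principle rescue the midpoint by sliding it along the all-ones direction back onto the surface; the content of the argument is that this slide is blocked by the non-negativity constraint once there is a third (zero) coordinate, which is precisely why the counterexample requires $n > 2$ and is consistent with Proposition~\ref{prop:n2_convex_exposure}.
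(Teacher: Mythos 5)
Your proposal is correct and follows essentially the same route as the paper's proof: the same reduction to finding a non-negative additive shift $c$ (your $u$, the paper's $c$) along $\vect{1}_n$ via the intermediate value theorem for $\gamma \le 2$, and the identical counterexample $\vect{p} = \delta_1$, $\vect{q} = \delta_2$, $w = \tfrac{1}{2}$ showing the required shift would be negative when $\gamma > 2$ and $n > 2$. The only (cosmetic) difference is that you establish the starting inequality $\sum_j v_j^\mu \le c$ from convexity of the sublevel set of $\vect{x} \mapsto \sum_j x_j^\mu$ for $\mu \ge 1$, in the style of the spherical-rule proof, whereas the paper derives it coordinate-wise from concavity of $t \mapsto t^{\gamma - 1}$.
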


\begin{proof}[Proof of Proposition~\ref{prop:tsallis}]
Fix $\gamma > 1$. We will write $G$ in place of $G_{\text{Tsa},\gamma}$. Up to a multiplicative factor of $\gamma$ that we are free to ignore, we have
\[\vect{g}(\vect{p}) = (p_1^{\gamma - 1}, \dots, p_n^{\gamma - 1}).\]
Let $\vect{p}, \vect{q} \in \Delta^n$ and $w \in [0, 1]$. We wish to find an $\vect{x} \in \Delta^n$ such that $\vect{g}(\vect{x}) = w \vect{g}(\vect{p}) + (1 - w) \vect{g}(\vect{q})$, i.e.
\[w p_j^{\gamma - 1} + (1 - w) q_j^{\gamma - 1} + c = x_j^{\gamma - 1},\]
for all $j \in [n]$, for some $c$. Since $\sum_j x_j = 1$, this $c$ must satisfy
\begin{equation} \label{eq:tsallis_requirement}
\sum_j (w p_j^{\gamma - 1} + (1 - w) q_j^{\gamma - 1} + c)^{1/(\gamma - 1)} = 1.
\end{equation}
Let $h(x) := \sum_j (w p_j^{\gamma - 1} + (1 - w) q_j^{\gamma - 1} + x)^{1/(\gamma - 1)}$. Note that $h$ is increasing in $x$.

First consider the case that $\gamma \le 2$. By concavity, we have that $w p_j^{\gamma - 1} + (1 - w) q_j^{\gamma - 1} \le (wp_j + (1 - w)q_j)^{\gamma - 1}$. This means that
\[h(0) = \sum_j (w p_j^{\gamma - 1} + (1 - w) q_j^{\gamma - 1})^{1/(\gamma - 1)} \le \sum_j (wp_j + (1 - w)q_j) = 1.\]
On the other hand, $\lim_{x \to \infty} h(x) = \infty$. Since $h$ is continuous, there must be some $x \in [0, \infty)$ such that $h(x) = 1$; call this value $c$. Then let
\[x_j = (w p_j^{\gamma - 1} + (1 - w) q_j^{\gamma - 1} + c)^{1/(\gamma - 1)}.\]
Then every $x_j$ is nonnegative and $\sum_j x_j = 1$, so we have succeeded.

Now consider the case that $\gamma > 2$, and consider as a counterexample $\vect{p} = (1, 0, \dots, 0)$, $\vect{q} = (0, 1, 0, \dots, 0)$, and $w = \frac{1}{2}$. To satisfy Equation~\ref{eq:tsallis_requirement}, we are looking for $c$ such that
\[h(c) = 2 \parens{\frac{1}{2} + c}^{1/(\gamma - 1)} + (n - 2)c^{1/(\gamma - 1)} = 1.\]
Note that $h(0) = 2 \cdot 2^{-1/(\gamma - 1)} = 2^{(\gamma - 2)/(\gamma - 1)} > 1$, so $c < 0$ (as $h$ is increasing). But in that case $x_j^{\gamma - 1} < 0$ for any $j \ge 3$, a contradiction (assuming $n > 2$).
\end{proof}

Note that because $\nabla G_{\text{Tsa}, \gamma}(\vect{p}) = (p_1^{\gamma - 1}, \dots, p_n^{\gamma - 1})$ (up to a constant factor), QA pooling with respect to the Tsallis scoring rule  can be thought of as an appropriately scaled coordinate-wise $(\gamma - 1)$-power mean. For $\gamma = 2$ it is the coordinate-wise arithmetic average. For $\gamma = 3$ it is the coordinate-wise root mean square, but with the average of the squares scaled by an appropriate additive constant so that, upon taking the square roots, the probabilities add to $1$. (However, as the Tsallis score with parameter $3$ does not have convex exposure, this is not always well-defined.)

In Corollary~\ref{cor:steep} we mentioned that the scoring rule given by $G(\vect{p}) = -\sum_j \ln p_j$ can be thought of as an extension to $\gamma = 0$ of (what we are now calling) the Tsallis score, because the derivative of $\ln x$ is $x^{-1}$. QA pooling with respect to this scoring rule is, correspondingly, the $-1$-power mean, i.e. the harmonic mean. This pooling method is appropriately referred to as \emph{harmonic pooling}, see e.g. \cite[\S4.2]{ddmc95}.

Finally, we note that the logarithmic scoring rule can likewise be thought of as an extension of the Tsallis score to $\gamma = 1$, in that the second derivative of $x \ln x$ is $x^{-1}$. It is likewise natural to call the geometric mean the $0$-power mean; notice that logarithmic pooling is precisely an appropriately scaled coordinate-wise geometric mean.
\end{appendix}
\end{document}